\documentclass[12pt]{amsart}

\usepackage[textheight=23.5cm, left=1.2cm, right=1.2cm]{geometry}

\usepackage[english]{babel}
\usepackage[T1]{fontenc}
\usepackage[utf8]{inputenc}

\theoremstyle{plain}
    \newtheorem{theorem}{Theorem}[section]
    \newtheorem{lemma}[theorem]{Lemma}
    \newtheorem{proposition}[theorem]{Proposition}
    \newtheorem{corollary}[theorem]{Corollary}

\theoremstyle{definition}
    \newtheorem{remark}[theorem]{Remark}

    \newtheorem{example}[theorem]{Example}

\usepackage[
    draft = false,
    unicode = true,
    colorlinks = true,
    allcolors = blue,
    hyperfootnotes = true,
    citecolor = red
]{hyperref}
\usepackage{amsfonts,amsmath,amssymb,amscd,amsthm,url}
\usepackage[inline]{enumitem}
\usepackage{graphicx,epsf,afterpage}
\usepackage{soul}
\usepackage{multicol}
\usepackage{array}
\usepackage{braket}
\usepackage{epigraph}
\usepackage{rotating}

\usepackage[
backend=biber,
style=numeric-comp,
sorting=none,
giveninits=true
]{biblatex}
\usepackage{csquotes}

\usepackage{xcolor}

\usepackage{ulem}

\usepackage{tikz}
\usetikzlibrary{positioning, calc, intersections, through, arrows, matrix, chains, math}
\usetikzlibrary{shadows}
\usetikzlibrary{graphs}
\usetikzlibrary{graphs.standard}
\usetikzlibrary{patterns}
\usetikzlibrary{decorations.pathreplacing,calligraphy,backgrounds}

\newcommand\norm[1]{\ensuremath{\left\lVert#1\right\rVert}}
\newcommand\abs[1]{\ensuremath{\left\lvert#1\right\rvert}}

\renewcommand{\Pr}{\mathrm{P}}

\DeclareMathOperator{\Expect}{\mathbb{E}}

\DeclareMathOperator{\Ker}{Ker}

\newcommand{\Lcal}{\mathcal{L}}
\newcommand{\Hcal}{\mathcal{H}}
\newcommand{\Tcal}{\mathcal{T}}
\newcommand{\Fcal}{\mathcal{F}}

\newcommand{\defing}[1]{\textbf{\emph{#1}}}

\newcommand{\R}{\ensuremath{\mathbb{R}}}

\newcommand{\Cx}{\ensuremath{\mathbb{C}}}

\newcommand{\N}{\ensuremath{\mathbb{N}}}

\renewcommand{\geq}{\geqslant}

\renewcommand{\leq}{\leqslant}

\newcounter{mcnt}

\newcounter{wordcnt}

\begin{document}

\title{Banach and counting measures, \\ and dynamics of singular quantum states \\ generated by averaging of operator random walks}

\author{E.\,A. Dzhenzher, S.\,V. Dzhenzher, V.\,Zh. Sakbaev}

\begin{abstract}
     In this paper the random channels and their compositions in the space of quantum states are studied.
     For compositions of i.i.d. random unitary channels, the limit behaviour of probability distributions is described.
     The sufficient condition for convergence in probability is obtained.
     The generalized convergence in distribution w.r.t. weak operator topology is obtained.
     The analysis of transmission of pure and normal states to the set of singular states is done. The dynamics of quantum states is described in terms of the evolution of the values of quadratic forms of operators from the algebra that implements the representation of canonical commutation relations.
\end{abstract}

\thanks{\hspace{-4mm}E.\,A. Dzhenzher: catbordacheva@mail.ru
\\
S.\,V. Dzhenzher: sdjenjer@yandex.ru. orcid: 0009-0008-3513-4312
\\
V.\,Zh. Sakbaev: fumi2003@mail.ru. orcid: 0000-0001-8349-1738
\\
V.\,Zh. Sakbaev: Keldysh Institute of Applied Mathematics of Russian Academy of Sciences 125047, Miusskaya pl. 4, Moscow, Russia
\\
All authors: Moscow Institute of Physics and Technology 141701, Institutskii per. 9, Dolgoprudny, Russia}

\maketitle
\thispagestyle{empty}


\emph{Keywords: Banach measure; counting measure; shift semigroup; quantum dynamics; singular quantum states.}

\vspace{5mm}

\emph{MSC: 46G12, 47A56, 47D03, 81P16.}

\newcommand{\one}{{\ensuremath{\mathbf{1}}}}
\newcommand{\Mbf}{{\ensuremath{\mathbf{M}}}}
\newcommand{\Sbf}{{\ensuremath{\mathbf{S}}}}
\newcommand{\Hbf}{{\ensuremath{\mathbf{H}}}}
\newcommand{\Ubf}{{\ensuremath{\mathbf{U}}}}
\newcommand{\Vbf}{{\ensuremath{\mathbf{V}}}}
\newcommand{\Fbf}{{\ensuremath{\mathbf{F}}}}
\newcommand{\Tbf}{{\ensuremath{\mathbf{T}}}}
\newcommand{\Phbf}{{\ensuremath{\mathbf{\Phi}}}}

\section{Introduction}

The field of random quantum channels and random quantum dynamical semigroups attracts significant interest from mathematical physics.
Random unitary channels and random quantum dynamical semigroups naturally arise in the consideration of the dynamics of open quantum systems \cite{Accardi-Volovich-Lu-2002, Kempe, Teretenkov2023}. In particular, they arise in the ambiguity of the quantization procedure \cite{OSS-2016}. For example, in \cite{AmosovSakbaev}, random semigroups of shifts are considered from the point of view of averaging over some measure.
The limit theorems for the compositions of i.i.d. random operators have applications in differential equations \cite{Kalmetev, Zamana, Sakbaev-Smolyanov-Zamana, Loboda2019} and in the description of the dynamics of open quantum systems \cite{Aharonov-D-Z, Pechen, DzhenzherSakbaev25-SLLN, Dzhenzher25, AmosovSakbaev}.
The limit behavior of the sequence of quantum random walks of i.i.d. unitary channels is described in \cite{Joye, GOSS-2022, DzhenzherSakbaev26-evol}.
Another law of large numbers for quantum dynamics was considered in \cite{Kolokoltsov2021, KolokoltsovTroeva2022}.

The unitary representation of the canonical commutation relations in the Hilbert space of functions square-integrable w.r.t. the Lebesgue measure is one of the fundamental points in the construction of quantum mechanics \cite{Holevo2011-ch2}. The dynamics of quantum states under the influence of quantum random walks was investigated in the works \cite{Holevo-qprobstat, Aharonov-D-Z, Pechen}. Quantum kinetic equations of the GKSL type, describing the quantum dynamical semigroup in the space of quantum states, were studied in the works \cite{Holevo-gens-diff}. Of interest is the question of the relationship between the well-posedness of the initial-boundary value problem describing the dynamics of a quantum state and the preservation of the normality property of the state. In the works \cite{Sakbaev2016-Cauchy, Volovich-Sakbaev-2018, Glazatov-Orlov-Sakbaev-2025}, it was shown that the boundedness of the time interval of existence of a solution to the Schrödinger equation indicates the singularization of the quantum state upon crossing the boundary of the solution's existence interval.

To obtain a unitary representation of random walks on a real line, the real axis is equipped with a counting measure.
The counting measure on a real line is a non-negative shift-invariant countably additive measure, which takes finite values only on finite subsets.
In the Hilbert space of square integrable w.r.t. counting measure functions, we consider the groups of unitary operators.
In particular, the unitary group of argument shifts and Weyl unitary representation of the Canonical Commutation Relations are considered.
The analogues of such groups for the classical Lebesgue measure on the real line were considered in \cite{Orlov-Sakbaev-25, DzhenzherSakbaev26-evol}.
It was shown there that the random unitary group actions on the pure states converge to other random pure states in different probabilistic modes.
Here, we show that a unitary group of shifts is not strongly continuous. The mean value of unitary representation of a one-parameter semigroup of Gaussian measures on the real line is the discontinuous semigroup of self-adjoint contractions. Also, the representation of Gaussian random walks by the random unitary channel in the space of quantum states is considered. 

We show that the mean value of the one-parameter semigroup of Gaussian measures on the real axis in the space of quantum states is the semigroup of singular quantum channels that map any quantum state to a singular state.
We describe the evolution of mean values of the semigroup of singular quantum channels on operators of a special algebra of bounded linear operators in the considered Hilbert space.
This algebra of operators includes operators of multiplications of a bounded measurable functions and operators of convolutions with a countable additive measures with a bounded variation. In this algebra of operators, canonical commutation relations are represented.

We note that in \cite{Volovich-Sakbaev-2018}, using regularization of an ill-posed initial-boundary value problem for the Schr\"{o}dinger equation with a maximal symmetric but not self-adjoint operator, the limit dynamics of states on the algebra of observables generated by the Abelian algebra of operators of multiplication by a bounded measurable function and a ring of compact operators was obtained. The resulting limit dynamics of a state also described the continuous transition of a quantum state to a set of singular states. The continuous dynamics of singular quantum states obtained in the present work is given by averaging unitary quantum channels and describes the instantaneous transition of any state to a set of singular states. The algebra of operators, on which the continuous dynamics of average values of quantum states is observed in the present work, is wider than the algebra investigated in the work \cite{Volovich-Sakbaev-2014}, since it also includes the operators of shifting the argument by an arbitrary vector.

The structure of the article is as follows.
In \S\ref{s:groups}, the notions of main groups are given, and their basic properties are discussed.
In \S\ref{s:q-groups}, the limit behaviour of these groups and their mean values is considered.
In \S\ref{s:q-walks}, the quantum states and quantum channels constructed by these groups are considered.

\section{Unitary groups and their continuity}\label{s:groups}

Let \(\beta\) be a non-principal ultrafilter concentrated on the infinity of the Real axis $\R$.
Let \(\one_A\) be the \defing{indicator function} of a set \(A\subset \R\)
(we identify \(\one_{\{x\}}\) and \(\one_x\) for \(x\in\R\)).
Denote by \(\mu_\beta\) the \defing{Banach--Ces\`{a}ro measure}, defined on the Borel $\sigma$-algebra \(\mathcal{B}(\R) \subset 2^\R\) by
\[
    \mu_\beta(A):=\lim_{\beta}\left( \frac{1}{2x} \int_{-x}^x\one_A(t)dt\right).
\]

Denote by \(\mathcal{R}_\mathcal{B}\) the completion of the $\sigma$-algebra \(\mathcal{B}(\R)\) by the measure \(\mu_\beta\).
Let \(\Hcal_\beta := \Lcal_2(\R, \mathcal{R}_\mathcal{B}, \mu_\beta, \Cx)\) be a Hilbert space of square-integrable over $\mu_\beta$ functions \(\R\to\Cx\).
It is known \cite{Glazatov-Sakbaev-2022} that it can be defined as the completion of simple \(\mathcal{B}(\R)\)-measurable maps \(\R\to\Cx\) by the Hilbert norm
\[
    \one_A \mapsto \sqrt{\mu_\beta(A)}.
\]
It is clear \cite{EDzhenzherSakbaev26} that \(\Lcal_\infty(\R) \subset \Hcal_\beta\).

In \cite{EDzhenzherSakbaev26}, it was shown that a system
\begin{equation}\label{eq:sys}
    \{ f_p\colon x\mapsto e^{ipx}\mid p\in {\mathbb R}\}
\end{equation}
forms in the space \(\Hcal_\beta\) the continuum orthonormal system, which is not complete in \(\Hcal_\beta\).
However, this system forms the orthonormal basis in the closed subspace
\[
    \Hcal^t \subset \Hcal_\beta
\]
independent of the ultrafilter \(\beta\).

Denote by \(\nu\colon 2^\R\to\N\cup\{+\infty\}\) the \defing{counting measure} defined by
\[
    \nu(A) := \abs{A};
\]
strictly speaking, we mean that
\[
    \nu(A) = \begin{cases}
        \abs{A}, &\text{if $A\subset \R$ is finite, and} \\
        +\infty, &\text{otherwise}.
    \end{cases}
\]
Define the Fourier transform \(\Fcal\colon \Hcal_\beta\to\hat\Hcal\) by
\[
    \Fcal u(y) := \int u(x)e^{-ixy}\,\mu_\beta(dx),
\]
and the inverse Fourier transform \(\Fcal^{-1}\colon \hat\Hcal\to\Hcal_\beta\) by
\[
    \Fcal^{-1}\hat{u}(x) := \int \hat{u}(y)e^{ixy}\,\nu(dy).
\]
The Fourier transform unitary maps the subspace \(\Hcal^t\) onto \(\hat\Hcal\), where \(\hat\Hcal\) is the space of square-integrable over \(\nu\) functions \(\R\to\Cx\);
in other words, a function \(\hat u\) lies in \(\hat\Hcal\) if it can be represented as
\[
    \hat u = \sum_{n=1}^\infty c_n \one_{p_n},
\]
where \(\{c_n\}\in\ell_2(\N,\Cx)\) and \(\{p_n\}\colon \N\to\R\).
Moreover,
\[
    (\Hcal^t)^\perp = \Ker\Fcal,
\]
which is clear from the definitions of \(\Hcal^t\) and \(\Fcal\).

In the space \(\Hcal_\beta\) the \defing{shift operator} \(\Sbf_h\) along any \(h\in\R\) defined by
\[
    \Sbf_h u(x) = u(x+h)
\]
is unitary.
Hence, the family \(\{\Sbf_{th}\}_{t\in\R}\) forms a one-parametrized unitary group, which is strongly continuous in \(\Hcal_\beta\) if and only if \(h=0\) (see~\cite{EDzhenzherSakbaev26}).

\begin{proposition}
    The unitary group \(\{\Sbf_{th}\}_{t\in\R}\) is strongly contunious in \(\Hcal^t\).
\end{proposition}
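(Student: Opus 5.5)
The plan is to diagonalize the shift group on \(\Hcal^t\) using the orthonormal basis \eqref{eq:sys}, and then prove continuity coefficientwise.

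First, each \(f_p\) is an eigenvector of every shift. Indeed,
\[
\Sbf_{th} f_p(x)=e^{ip(x+th)}=e^{ipth}f_p(x),
\]
so \(\Sbf_{th}f_p=e^{ipth}f_p\). Hence \(\Sbf_{th}\) maps the linear span of \(\{f_p\}\) into itself. Since \(\Sbf_{th}\) is unitary in \(\Hcal_\beta\), it is bounded, so it maps the closure \(\Hcal^t\) into itself. Applying the same argument to \(\Sbf_{-th}=\Sbf_{th}^{-1}\) shows that \(\Hcal^t\) is invariant under the whole group. Therefore \(\{\Sbf_{th}\}_{t\in\R}\) restricts to a unitary group on \(\Hcal^t\), and the statement is meaningful.

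Next, take an arbitrary \(u\in\Hcal^t\). Because \(\{f_p\}\) is an orthonormal basis of \(\Hcal^t\), only countably many coefficients of \(u\) are nonzero. Equivalently, using the unitary Fourier transform \(\Fcal\colon\Hcal^t\to\hat\Hcal\), we can write
\[
u=\sum_{n} c_n f_{p_n},\qquad \{c_n\}\in\ell_2 .
\]
By the eigenvector relation and Parseval's identity,
\[
\norm{\Sbf_{th}u-u}^2=\sum_n |c_n|^2\,\abs{e^{ip_nth}-1}^2 .
\]
Put differently, \(\Fcal\Sbf_{th}\Fcal^{-1}\) is multiplication by \(e^{ipth}\) in \(\Lcal_2(\R,\nu)\).

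Each term of this sum tends to \(0\) as \(t\to t_0\), after subtracting the base point; by the group property it suffices to take \(t_0=0\). Each term is also bounded by \(4|c_n|^2\), which is summable. Dominated convergence for series (the Weierstrass M-test) then gives \(\norm{\Sbf_{th}u-u}\to 0\) as \(t\to 0\). An alternative is an \(\varepsilon/3\) argument: truncate the series to a finite sum, use that finitely many exponentials are continuous in \(t\), and use the uniform bound \(\norm{\Sbf_{th}}=1\) to control the tail.

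The only delicate point is to use the orthonormal expansion in \(\Hcal^t\) with respect to the \(\Hcal_\beta\)-norm. The failure of strong continuity on all of \(\Hcal_\beta\), noted above, comes from \(\Ker\Fcal=(\Hcal^t)^\perp\), which this argument never touches. Beyond that, the steps are routine.
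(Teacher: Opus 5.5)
Your proposal is correct and follows the paper's approach: the key fact in both is the eigenvector relation \(\Sbf_{h} f_p = e^{iph} f_p\) for the orthonormal basis of \(\Hcal^t\). The paper checks continuity only on the basis vectors \(f_p\). You additionally make explicit two steps it leaves implicit: the invariance of \(\Hcal^t\) under the group, and the extension to arbitrary \(u\in\Hcal^t\) via Parseval and dominated convergence.
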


\begin{proof}
    The elements of the continuum system~\eqref{eq:sys} form the system of eigenvectors of an operator \(\Sbf_{h}\).
    Precisely,
    \[
        \Sbf_h f_p = e^{iph}f_p
        \quad\text{for any}\quad p\in\R.
    \]
    Hence
    \[
        \norm{\Sbf_h f_p - f_p} = \abs{e^{iph}-1}\norm{f_p} \xrightarrow[h\to0]{} 0.
    \]
\end{proof}

The Fourier transform implements the unitary equivalence of the group \(\{\Sbf_{th}\}_{t\in\R}\) acting on \(\Hcal^t\), and the group \(\{\hat\Mbf_{th}\}_{t\in\R}\), acting on functions
\[
    \hat f_p \in \hat\Hcal,\quad p\in\R,
\]
by the rule
\[
    \hat\Mbf_{h} \hat f_p = e^{iph}\hat f_p.
\]
Strictly speaking, the unitary operator \(\hat\Mbf_h\) for \(h\in\R\) is defined on \(\hat\Hcal\) by
\[
    \hat\Mbf_h\hat u(x) = e^{ihx}\hat{u}(x).
\]

Let \(a\in\R\).
Now consider on the space \(\Hcal_\beta\) the unitary operator \(\Mbf_a\) of multiplication on the function \(x\mapsto e^{iax}\).
Such an operator is well-defined since it is defined on the dense invariant subspace \(\Lcal_\infty(\R)\) and is unitary, and hence can be extended to the whole \(\Hcal_\beta\).
The family \(\{\Mbf_{ta}\}_{t\in\R}\) forms an unitary group.

\begin{proposition}
    The group \(\{\Mbf_{ta}\}_{t\in\R}\) is strongly continuous if and only if $a=0$.
\end{proposition}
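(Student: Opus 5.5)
The case \(a=0\) is trivial, since then \(\Mbf_{ta}=I\) for all \(t\). For \(a\neq0\) I will exhibit one vector on which the orbit is discontinuous at \(t=0\). The natural test vector is the constant \(\one=\one_\R\in\Lcal_\infty(\R)\subset\Hcal_\beta\), which has norm \(1\) because \(\mu_\beta(\R)=1\). Then \(\Mbf_{ta}\one=f_{ta}\), and the continuum system~\eqref{eq:sys} is orthonormal in \(\Hcal_\beta\), as recalled from \cite{EDzhenzherSakbaev26}. Hence for every \(t\neq0\) we have \(ta\neq0\) and
\[
    \norm{\Mbf_{ta}\one-\one}^2=\norm{f_{ta}-f_0}^2=\norm{f_{ta}}^2+\norm{f_0}^2=2 .
\]
So \(\norm{\Mbf_{ta}\one-\one}=\sqrt2\) does not tend to \(0\) as \(t\to0\), and the group is not strongly continuous.

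To keep the argument self-contained, I would verify the orthogonality directly rather than only cite it. For \(p\neq0\),
\[
    (f_p,f_0)=\lim_\beta \frac1{2x}\int_{-x}^x e^{ipt}\,dt=\lim_\beta\frac{\sin(px)}{px}=0,
\]
since \(\abs{\sin(px)/(px)}\leq 1/(\abs{p}x)\to0\) and \(\beta\) is concentrated at infinity. The only point needing care is justifying that the inner product of two bounded functions is given by this Banach--Ces\`aro average. That is where I expect the main (mild) obstacle. It follows from the construction of \(\Hcal_\beta\) as the completion of simple functions: bounded measurable functions are uniform limits of simple ones, and for simple functions the pairing is \(\int u\bar v\,d\mu_\beta\), i.e., the \(\beta\)-limit of averages by finite additivity.

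Finally, I note that strong continuity on all of \(\Hcal_\beta\) requires continuity on every vector, so a single bad vector suffices. In fact the same computation gives \(\norm{\Mbf_{ta}f_p-f_p}=\sqrt2\) for every \(p\), so the orbit is discontinuous at each element of the system~\eqref{eq:sys}.
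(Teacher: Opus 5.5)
Your proof is correct and takes essentially the same route as the paper, which likewise tests the group on an element \(f_p\) of the system~\eqref{eq:sys} and evaluates the Banach--Ces\`aro average of \(\abs{e^{isx}-1}^2\) to obtain a nonzero constant. You package that computation as the orthogonality \((f_{ta},f_0)=0\), and you also handle the trivial direction \(a=0\), which the paper omits.

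Your constant \(\sqrt2\) is the accurate one. Indeed \(\abs{e^{isx}-1}^2=2-2\cos(sx)\) has \(\beta\)-average \(\lim_\beta\bigl(2-\tfrac{2\sin(sx)}{sx}\bigr)=2\). The paper's displayed value \(1\) drops this factor of \(2\) and writes the norm for its square, but the slip does not affect the conclusion.
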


\begin{proof}
    For simplicity, let \(a=1\).
    It is sufficient to consider some \(f_p\in\Hcal^t\). Then
    \[
        \norm{\Mbf_{t+s} f_p - \Mbf_t f_p} = \int \abs{e^{isx} - 1}^2\,\mu_\beta(dx) = \lim_\beta \frac{1}{x}\left(x-\frac{\sin(sx)}{s}\right)=1.
    \]
\end{proof}

This discontinuous unitary group is unitary equivalent (by the Fourier transform) to the unitary group of the operators
\[
    \hat\Sbf_{ta} = \Fcal \Mbf_{ta}\Fcal^{-1},
\]
acting on the functions \(\hat u\in\hat\Hcal\) by the rule
\[
    \hat\Sbf_{a} \hat u(x) = \hat u(x+a).
\]

Consider now the operator \(\Hbf_h\), acting on the space \(\Hcal^t\) by the rule
\[
    \Hbf_h f_p = ihpf_p.
\]
This operator generates the group \(\Sbf_{th}\) of shifts.
Hence,
\[
    \frac{d}{dt} \hat\Mbf_{th} = ihp\hat f_p.
\]
In the invariant supspace
\[
    \Hcal_\beta \ominus \Hcal^t
\]
the group \(\{\Sbf_{th}\}_{t\in\R}\) is not strongly continuous, so we cannot say about the generator of the group on this supspace.
All the same is with the group \(\Mbf_{ta}\) on \(\Hcal_\beta\).

Note that
\[
    \hat\Sbf_h \hat\Mbf_a = e^{iah}\hat\Mbf_a\hat\Sbf_h,
\]
so their groups realize the canonical commutative Weyl relations.

\section{Quantum semigroups}\label{s:q-groups}

Now fix a probability space \((\Omega, \sigma_\Omega, \Pr)\).

Consider quantum random walks of the type
\[
    \Ubf_n(t)=\Sbf_{{\sqrt\frac{t}{n}}\xi_n}\circ \ldots\circ \Sbf_{{\sqrt\frac{t}{n}}\xi_1},\quad t\geq 0,\ n\in \N,
\] 
where \(\xi,\xi_1,\ldots,\xi_n\) are independent identically distributed (i.i.d.) $\R$-valued random variables.
Denote
\[
    \hat\Ubf_n(t) := \Fcal \Ubf_n(t)\Fcal^{-1}.
\]
Then
\[
    \hat\Ubf_n(t) = \hat\Mbf_{{\sqrt\frac{t}{n}}\xi_n}\circ \ldots\circ \hat\Mbf_{{\sqrt\frac{t}{n}}\xi_1}.
\]

For \(\mathbf{K}\in\{\Mbf, \hat\Mbf, \Sbf, \hat\Sbf\}\) define the average of the random operator-valued function \(\mathbf{K}_{\sqrt{t}\xi}\) as
\[
    \Fbf_\mathbf{K}(t) := \Expect \mathbf{K}_{\sqrt{t}\xi}.
\]

Denote by \(\chi_\xi\colon\R\to\Cx\) the characteristic function of a random variable \(\xi\), defined by
\[
    \chi_\xi(x) = \Expect e^{ix\xi}.
\]

\begin{proposition}\label{p:chi-mult}
    We have
    \[
        \Fbf_{\hat\Mbf}(t)\hat u(x) = \chi_\xi(\sqrt{t}x) \hat{u}(x)
    \]
    and
    \[
        \Fbf_{\Mbf}(t) u(x) = \chi_\xi(\sqrt{t}x) {u}(x).
    \]
\end{proposition}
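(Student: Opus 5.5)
The plan is to first give the expectation $\Fbf_\mathbf{K}(t)=\Expect \mathbf{K}_{\sqrt t\xi}$ a precise meaning, and then compute it pointwise. Since the operators $\hat\Mbf_{\sqrt t\xi(\omega)}$ and $\Mbf_{\sqrt t\xi(\omega)}$ are unitary, the random operator is uniformly bounded by $1$ in norm. I will therefore understand the average in the weak (Pettis) sense: $\Fbf_\mathbf{K}(t)$ is the bounded operator defined by
\[
(\Fbf_\mathbf{K}(t)u,v)=\Expect(\mathbf{K}_{\sqrt t\xi}u,v)
\]
for all $u,v$ in the relevant Hilbert space. The right-hand side is a bounded sesquilinear form, with modulus at most $\norm{u}\norm{v}$, so the Riesz theorem provides a unique operator of norm at most $1$. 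It remains to identify this operator with multiplication by $x\mapsto\chi_\xi(\sqrt t x)$.

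For the hatted statement I would compute on $\hat\Hcal=\Lcal_2(\R,\nu)$. Take $\hat u,\hat v\in\hat\Hcal$, and write them as $\sum c_n\one_{p_n}$ and $\sum d_m\one_{q_m}$. Then
\[
(\hat\Mbf_{\sqrt t\xi}\hat u,\hat v)=\sum_{x} e^{i\sqrt t\xi x}\hat u(x)\overline{\hat v(x)},
\]
where the sum runs over the countable set of atoms. This series is dominated by $\sum_x\abs{\hat u(x)\hat v(x)}\leq\norm{\hat u}\norm{\hat v}$, uniformly in $\omega$. By Fubini (or dominated convergence applied to the sum) I can take the expectation inside. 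Using $\Expect e^{i\sqrt t x\xi}=\chi_\xi(\sqrt t x)$, this gives
\[
\sum_x\chi_\xi(\sqrt t x)\hat u(x)\overline{\hat v(x)}.
\]
That is exactly the form of the multiplication operator by the bounded function $\chi_\xi(\sqrt t\,\cdot)$, and since $\abs{\chi_\xi}\leq1$ this operator is bounded.

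For the unhatted statement on $\Hcal_\beta$ the same scheme applies with $\mu_\beta$ in place of $\nu$. The difficulty is that $\mu_\beta$ is only finitely additive (a Banach limit), so Fubini is not directly available. I would therefore first take $u,v\in\Lcal_\infty(\R)$, and more specifically simple Borel functions, which are dense in $\Hcal_\beta$. For these,
\[
(\Mbf_{\sqrt t\xi}u,v)=\lim_\beta\frac1{2y}\int_{-y}^{y}e^{i\sqrt t\xi x}u\bar v\,dx.
\]
For each fixed $y$ the ordinary Lebesgue integral commutes with $\Expect$ by the classical Fubini theorem. The obstacle is exchanging $\lim_\beta$ with $\Expect$. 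I would handle it by bounded convergence along the ultrafilter: the functions $\omega\mapsto\frac1{2y}\int_{-y}^y(\cdots)$ are bounded by $\norm{u}_\infty\norm{v}_\infty$. However, bounded convergence along an ultrafilter does not hold in general, since the $\beta$-limit of random variables need not be measurable.

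If that interchange fails, my fallback is to use the structure of the space. I would decompose $\Hcal_\beta=\Hcal^t\oplus\Ker\Fcal$. On $\Hcal^t$ the claim follows from the hatted case via the unitary equivalence $\Fcal$. On the complement I would argue through the operator-valued integral defined as a Bochner-type strong limit of simple random variables: approximate $\xi$ uniformly on sets of large probability by simple random variables $\xi^{(k)}$. For simple $\xi^{(k)}$ the identity is just linearity, because a finite convex combination of multiplication operators is multiplication by the corresponding finite combination of exponentials, which equals $\chi_{\xi^{(k)}}(\sqrt t\,\cdot)$. I would then pass to the limit using $\norm{\Mbf_a-\Mbf_b}\leq\sup_{x}\abs{e^{iax}-e^{ibx}}$ on bounded $x$-ranges, together with the fact that $\chi_{\xi^{(k)}}\to\chi_\xi$ locally uniformly. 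This limiting step, given the non-continuity of $a\mapsto\Mbf_a$ established earlier, is where I expect the real work to lie. It may force the statement to be read with the expectation defined in this weak, simple-function-limit sense.
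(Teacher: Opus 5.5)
Your treatment of \(\hat\Hcal\) is fine. It is a weak-form version of the paper's proof, which computes \(\Expect e^{i\sqrt t\xi x}\hat u(x)=\chi_\xi(\sqrt t x)\hat u(x)\) pointwise in \(x\) and then declares the \(\Hcal_\beta\) case ``analogous''. The gap is in the unhatted claim. In the weak (Pettis) reading you adopt, you do not actually prove it: you correctly note that \(\lim_\beta\) and \(\Expect\) need not commute, but both halves of your fallback fail.

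First, \(\Fcal\) does not conjugate \(\Mbf_a\) to \(\hat\Mbf_a\). Since \(\Fcal f_p=\one_p\) and \(\Mbf_a f_p=f_{p+a}\), the operator \(\Fcal\Mbf_a\Fcal^{-1}\) is the shift \(\one_p\mapsto\one_{p+a}\), while \(\hat\Mbf_h\) corresponds to \(\Sbf_h\). So the hatted identity tells you about \(\Expect\Sbf_{\sqrt t\xi}\) on \(\Hcal^t\), not about \(\Expect\Mbf_{\sqrt t\xi}\). On \(\Hcal^t\) the claim really amounts to \(\lim_{T\to\infty}\frac{1}{2T}\int_{-T}^{T}\chi_\xi(\sqrt t x)e^{i(p-q)x}\,dx=\Pr\{\sqrt t\xi=q-p\}\), the inversion formula for atoms.

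Second, the approximation by simple \(\xi^{(k)}\) cannot converge. Because \(\mu_\beta\) vanishes on bounded sets, a multiplication operator on \(\Hcal_\beta\) sees its symbol only near infinity. Bounds ``on bounded \(x\)-ranges'' and locally uniform convergence \(\chi_{\xi^{(k)}}\to\chi_\xi\) therefore control nothing. Already \(\norm{(\Mbf_a-\Mbf_b)f_p}=\norm{f_{p+a}-f_{p+b}}=\sqrt2\) for every \(a\neq b\). Moreover, for \(\xi\) with continuous law the map \(\omega\mapsto\Mbf_{\sqrt t\xi}f_p\) is not essentially separably valued, so no Bochner-type integral exists. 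Concretely, for \(\xi\) uniform on \([0,1]\), multiplication by \(\chi_\xi(\sqrt t\,\cdot)\) is the zero operator on \(\Hcal_\beta\) because \(\chi_\xi\) vanishes at infinity. Yet every dyadic approximant gives multiplication by a trigonometric polynomial of norm \(1\).

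What closes the gap is a structural fact plus a classical theorem.

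\emph{Measurability.} For \(u,v\in\Lcal_\infty(\R)\) one has \((\Mbf_a u,v)=(f_a,\bar u v)_{\Hcal_\beta}\). By Bessel's inequality for the orthonormal system \(\{f_a\}\), the function \(a\mapsto(\Mbf_a u,v)\) vanishes off a countable set. Uniform approximation in \(a\) extends this to all \(u,v\in\Hcal_\beta\). Hence \(\omega\mapsto(\Mbf_{\sqrt t\xi}u,v)\) is measurable and \(\Expect(\Mbf_{\sqrt t\xi}u,v)=\sum_k p_k(\Mbf_{\sqrt t b_k}u,v)\), where the \(b_k\) are the atoms of \(\xi\) with masses \(p_k\).

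\emph{The other side.} Write \(\chi_\xi=\sum_k p_k e^{ib_k\cdot}+\hat\mu_c\), where \(\mu_c\) is the continuous part of the law. The atomic part gives the norm-convergent sum \(\sum_k p_k\Mbf_{\sqrt t b_k}\). The continuous part is the zero element of \(\Hcal_\beta\) by Wiener's theorem, \(\lim_{T\to\infty}\frac{1}{2T}\int_{-T}^{T}\abs{\hat\mu_c}^2=\sum_a\mu_c(\{a\})^2=0\), so multiplication by it vanishes. The two sides then agree.

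The paper sidesteps all of this by reading \(\Expect\) pointwise in \(x\), where the identity is immediate. Once you insist on the weak meaning, this extra argument is needed.
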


\begin{proof}
    Since
    \[
        \hat\Mbf_{\sqrt{t}h}\hat u(x)=e^{i\sqrt t hx}\hat u(x)
    \]
    for any function \(\hat u\in \hat\Hcal\),
    \[
        \Fbf_{\hat\Mbf}(t)\hat u(x)= \Expect e^{i\sqrt{t}\xi x}\hat{u}(x) = \chi_\xi(\sqrt{t}x) \hat{u}(x).
    \]
    The second claim is proved analogously.
\end{proof}



\begin{lemma}\label{l:zero-shift}
    Suppose that \(\xi\) does not have a discrete part.
    Then the family \(\left\{\Fbf_{\hat\Sbf}(t)\right\}_{t\geq0}\) on \(\hat\Hcal\) forms a semigroup
    \[
        \Fbf_{\hat\Sbf}(t) = \begin{cases}
            \mathbf{I}, &t=0,\\
            \mathbf{0}, &t>0.
        \end{cases}
    \]
\end{lemma}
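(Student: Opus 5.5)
Since \(\Fbf_{\hat\Sbf}(t)=\Expect\hat\Sbf_{\sqrt t\xi}\) and \(\hat\Sbf_0=\mathbf{I}\), the case \(t=0\) is immediate. For \(t>0\), I would show that \(\Fbf_{\hat\Sbf}(t)\hat u=0\) for every \(\hat u\in\hat\Hcal\). Afterwards the semigroup property is trivial: the products \(\mathbf{I}\cdot\mathbf{I}\), \(\mathbf{I}\cdot\mathbf{0}\), \(\mathbf{0}\cdot\mathbf{0}\) reproduce the claimed family.

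Each \(\hat\Sbf_{a}\) is unitary, so \(\norm{\Expect\hat\Sbf_{\sqrt t\xi}}\leq1\). It therefore suffices to check the claim on the orthonormal basis \(\{\one_p\}_{p\in\R}\) of \(\hat\Hcal\) and extend by linearity and continuity. I will interpret the expectation in the weak (Pettis) sense: \(\Fbf_{\hat\Sbf}(t)\) is the bounded operator determined by
\[
\langle \Fbf_{\hat\Sbf}(t)\hat u,\hat v\rangle=\Expect\langle \hat\Sbf_{\sqrt t\xi}\hat u,\hat v\rangle .
\]
By this definition the operator vanishes as soon as all matrix elements \(\langle \Fbf_{\hat\Sbf}(t)\one_p,\one_q\rangle\) vanish.

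The computation goes as follows. We have \(\hat\Sbf_a\one_p(x)=\one_p(x+a)=\one_{p-a}(x)\), so \(\langle\hat\Sbf_{a}\one_p,\one_q\rangle=\one_{\{a=p-q\}}\). Consequently
\[
\langle \Fbf_{\hat\Sbf}(t)\one_p,\one_q\rangle=\Pr\bigl(\sqrt t\,\xi=p-q\bigr)=\Pr\bigl(\xi=(p-q)/\sqrt t\bigr)=0,
\]
because \(\xi\) has no atoms. Hence \(\Fbf_{\hat\Sbf}(t)\one_p=0\) for each \(p\), and by boundedness and density of finite linear combinations \(\Fbf_{\hat\Sbf}(t)=\mathbf{0}\).

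The main obstacle is justifying the meaning of the expectation. The map \(a\mapsto\hat\Sbf_a\hat u\) is not Bochner measurable: it takes uncountably many mutually orthogonal values, so it is not strongly measurable. The expectation therefore cannot be a strong Bochner integral, and must be read weakly as above. I would state this explicitly and check that the weak definition gives a bounded operator. This follows from \(\abs{\Expect\langle\hat\Sbf_{\sqrt t\xi}\hat u,\hat v\rangle}\leq\norm{\hat u}\norm{\hat v}\), together with measurability of \(a\mapsto\langle\hat\Sbf_a\hat u,\hat v\rangle=\sum_{n,m}c_n\bar d_m\one_{\{a=p_n-q_m\}}\), which is a countable sum of Borel functions.
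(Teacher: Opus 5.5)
Your proof is correct and follows the paper's argument. Both read $\Fbf_{\hat\Sbf}(t)$ weakly, via matrix elements, and use that $(\hat\Sbf_{\sqrt t\xi}\hat u,\hat v)\neq 0$ only when $\sqrt t\,\xi$ falls in a countable set, which is null because $\xi$ is atomless. The differences are minor: the paper handles arbitrary $\hat u,\hat v$ directly through their countable supports, whereas you check the basis vectors $\one_p,\one_q$ and extend by boundedness, and you also make explicit the measurability and boundedness of the weak integral that the paper leaves implicit.
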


\begin{proof}
    It is sufficient to prove that
    for any $t>0,  \ \hat u,\hat v\in \hat \Hcal$ we have
    \begin{equation}\label{eq:zero}
        (\Fbf_{\hat\Sbf}(t)\hat u, \hat v) = 0.
    \end{equation}
    Since \(\hat u,\hat v\in\hat\Hcal\), there exist some countable sets \(M_u,M_v\subset\R\) such that \(\hat u=0\) outside \(M_u\), and analogously for \(\hat v\).
    Then the event
    \[
        (\hat\Sbf_{\sqrt{t}\xi}\hat u, \hat v)\neq 0
    \]
    is countable.
\end{proof}

\begin{lemma}
    Suppose that \(\xi\) does not have a discrete part.
    Then the family \(\left\{\Fbf_\Mbf(t)\right\}_{t\geq0}\) on \(\Hcal^t\) forms a semigroup
    \[
        \Fbf_\Mbf(t) = \begin{cases}
            \mathbf{I}, &t=0,\\
            \mathbf{0}, &t>0.
        \end{cases}
    \]
\end{lemma}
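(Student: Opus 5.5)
The plan is to reduce the statement to Lemma~\ref{l:zero-shift} via the Fourier transform, and to back this up with a direct computation on the basis~\eqref{eq:sys}.

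\emph{Step 1: transfer to $\hat\Hcal$.} For every $h\in\R$ we have $\Fcal\Mbf_h\Fcal^{-1}=\hat\Sbf_h$ on $\hat\Hcal$. The map $\Fcal$ is a unitary $\Hcal^t\to\hat\Hcal$, and $\Hcal^t$ is invariant under every $\Mbf_h$. Indeed, $\Mbf_h f_p=f_{p+h}$, and $\Mbf_h$ is unitary, so it maps the closed span of the system~\eqref{eq:sys} onto itself. Taking expectations, which commute with the fixed bounded operators $\Fcal$ and $\Fcal^{-1}$ in the weak sense, gives
\[
\bigl(\Fbf_\Mbf(t)u,v\bigr)=\Expect(\Mbf_{\sqrt t\xi}u,v)=\Expect(\hat\Sbf_{\sqrt t\xi}\Fcal u,\Fcal v)=\bigl(\Fbf_{\hat\Sbf}(t)\Fcal u,\Fcal v\bigr)
\]
for all $u,v\in\Hcal^t$. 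Lemma~\ref{l:zero-shift} then gives zero for $t>0$. For $t=0$ we have $\Mbf_0=\mathbf I$, so the case $t=0$ is trivial. Because $\mathbf I$ and $\mathbf 0$ satisfy $\mathbf 0\cdot\mathbf 0=\mathbf 0$ and $\mathbf I\cdot\mathbf 0=\mathbf 0$, the family is trivially a semigroup.

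\emph{Step 2: direct check, to avoid relying on measurability subtleties.} Expand $u=\sum_n a_n f_{p_n}$ and $v=\sum_m b_m f_{q_m}$ with $\ell_2$ coefficients and countable sets of frequencies. Then
\[
(\Mbf_{\sqrt t\xi}u,v)=\sum_{n,m}a_n\bar b_m\,\one_{\{p_n+\sqrt t\xi=q_m\}}.
\]
For each fixed $\omega$ the double sum is bounded by $\norm u\norm v$, since for fixed $\xi$ each $n$ pairs with at most one $m$. So dominated convergence lets us interchange $\Expect$ with the sum. Each term then has expectation $a_n\bar b_m\Pr\bigl(\xi=(q_m-p_n)/\sqrt t\bigr)$, which equals $0$ because $\xi$ has no atoms. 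Summing over countably many pairs gives $(\Fbf_\Mbf(t)u,v)=0$. Since this holds for all $v\in\Hcal^t$, and $\Fbf_\Mbf(t)$ restricted to $\Hcal^t$ takes values in $\Hcal^t$ by the invariance above, we conclude $\Fbf_\Mbf(t)u=0$.

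\emph{Main obstacle.} The delicate point is the meaning of $\Fbf_\Mbf(t)=\Expect\Mbf_{\sqrt t\xi}$. The map $h\mapsto\Mbf_h$ is not strongly continuous, so a Bochner integral is not available and the expectation must be taken in the weak (Pettis) sense. I would define it through the scalar functions $\omega\mapsto(\Mbf_{\sqrt t\xi}u,v)$. For vectors in $\Hcal^t$, Step 2 shows these functions are measurable, being countable sums of indicators of measurable events, and bounded. This yields a well-defined bounded operator via the Riesz representation.

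Proposition~\ref{p:chi-mult} gives a different formula, multiplication by $\chi_\xi(\sqrt t x)$. This does not contradict the result: $\chi_\xi(\sqrt t x)\to0$ as $|x|\to\infty$ by Riemann--Lebesgue when $\xi$ is absolutely continuous, which makes the product zero in $\mu_\beta$-norm. For general atomless $\xi$, however, $\chi_\xi$ need not decay, and Proposition~\ref{p:chi-mult} would not settle the matter. This is why I rely on the orthonormal-basis computation of Step 2 rather than on that multiplier formula.
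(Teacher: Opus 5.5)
Your proof is correct and is essentially the paper's argument for general $\xi$. You transfer to $\hat\Hcal$ via the Fourier unitary equivalence of $\Mbf_h$ on $\Hcal^t$ with the shifts, apply Lemma~\ref{l:zero-shift}, and your Step~2 simply makes that lemma's countability argument explicit and justifies interchanging $\Expect$ with the sum. (Strictly, $\Fcal\Mbf_h\Fcal^{-1}\one_p=\one_{p+h}$, which is $\hat\Sbf_{-h}$ under the paper's convention $\hat\Sbf_a\hat u(x)=\hat u(x+a)$; this is harmless, since $-\xi$ is atomless too.)

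One correction to your closing remark. By Wiener's theorem, $\frac{1}{2X}\int_{-X}^{X}\abs{\chi_\xi(s)}^2\,ds\to\sum_a\Pr(\xi=a)^2=0$ for every atomless $\xi$. So $\chi_\xi(\sqrt t\,\cdot)$ already has zero $\Hcal_\beta$-norm, and the multiplier route of Proposition~\ref{p:chi-mult}, which the paper uses for the Gaussian case, would in fact cover all atomless $\xi$ without needing any pointwise decay.
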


\begin{proof}
    Note for \(\xi\sim N(0,D)\) we have \(\chi_\xi(x) = e^{-\frac{1}{2}x^2D}\), and so
    \[
        \Fbf_\Mbf(t)f_p(x) = e^{-\frac{1}{2}tx^2D}f_p(x)=0,
    \]
    where the latter equality follows since functions \(e^{-x^2}\) and \(e^{ipx}\) are orthogonal.
    For the general \(\xi\) use the previous lemma and the unitary equivalence of random walks of the kind \(\hat\Sbf_h\) in \(\hat\Hcal\) and random exponents of the kind \(\Mbf_h\) in \(\Hcal^t\).
\end{proof}

It is interesting to investigate these semigroups in the spirit of the Chernoff Product Formula.

\begin{lemma}
    Suppose that \(\Expect\xi = 0\) and \(0 < \Expect\xi^2 =: D < +\infty\).
    Then the sequence of \(\Expect\hat\Ubf_n(t)\) converges in SOT of \(B(\hat\Hcal)\) uniformly on any segment of \(\R_+\) to the semigroup of the operators of multiplication on the function
    \[
        x\mapsto e^{-\frac{1}{2}tDx^2}.
    \]
\end{lemma}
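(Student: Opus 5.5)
By independence of $\xi_1,\dots,\xi_n$ and Proposition~\ref{p:chi-mult}, the averaged walk is an explicit multiplication operator. Each $\hat\Mbf_{\sqrt{t/n}\,\xi_k}$ multiplies by $e^{i\sqrt{t/n}\,\xi_k x}$, and these operators commute. The composition is therefore multiplication by $e^{i\sqrt{t/n}\,(\xi_1+\dots+\xi_n)x}$. Taking expectations pointwise in $x$ and using independence gives
\[
    \Expect\hat\Ubf_n(t)\hat u(x) = \Bigl(\chi_\xi\bigl(\sqrt{t/n}\,x\bigr)\Bigr)^n \hat u(x) =: \phi_n(t,x)\,\hat u(x),
\]
that is, $\Expect\hat\Ubf_n(t)=\Fbf_{\hat\Mbf}(t/n)^n$. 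The statement is thus reduced to the convergence of the multipliers $\phi_n(t,\cdot)$ to $\phi(t,x):=e^{-\frac12 tDx^2}$ in a sense that yields SOT convergence on $\hat\Hcal=\ell_2$ over the counting measure.

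The scalar step is the classical central limit theorem computation. Since $\Expect\xi=0$ and $\Expect\xi^2=D<\infty$, we have $\chi_\xi(s)=1-\frac12 Ds^2+o(s^2)$ as $s\to0$. Hence for every fixed $x$,
\[
    \phi_n(t,x)\to e^{-\frac12 tDx^2}, \qquad n\to\infty.
\]
This convergence is uniform for $(t,x)\in[0,T]\times[-R,R]$. To see this, use $|a^n-b^n|\le n|a-b|$ for $|a|,|b|\le1$ with $a=\chi_\xi(\sqrt{t/n}\,x)$ and $b=e^{-\frac{tDx^2}{2n}}$. We get
\[
    n|a-b| \le n\,\frac{tx^2}{n}\,\varepsilon\bigl(\sqrt{t/n}\,x\bigr) \le TR^2\,\varepsilon\bigl(\sqrt{T/n}\,R\bigr)\to0,
\]
where $\varepsilon(s)\to0$ as $s\to0$ is the remainder $|\chi_\xi(s)-1+\frac12 Ds^2|/s^2$ combined with the matching remainder of the exponential.

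The passage to SOT is the main, though mild, obstacle, because $\hat u$ lives on a countable set $\{p_k\}$ that may be unbounded. Fix $\hat u=\sum_k c_k\one_{p_k}$ and $\varepsilon>0$. Choose a finite $N$ with $\sum_{k>N}|c_k|^2<\varepsilon^2$, and put $R=\max_{k\le N}|p_k|$. Since $|\phi_n|\le1$ and $|\phi|\le1$, we obtain
\[
    \norm{(\phi_n(t,\cdot)-\phi(t,\cdot))\hat u}^2 \le \sup_{t\le T,\,|x|\le R}|\phi_n-\phi|^2\,\norm{\hat u}^2 + 4\varepsilon^2 .
\]
The first term tends to $0$ by the uniform scalar convergence above. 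This gives SOT convergence uniformly in $t\in[0,T]$.

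Finally, the limit operators $e^{-\frac12 tDx^2}$ form a semigroup, since multipliers compose by multiplying exponents. This completes the plan.
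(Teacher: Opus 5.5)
Your proof is correct, but it reaches the uniform SOT convergence by a different route than the paper. Both arguments start from the same reduction $\Expect\hat\Ubf_n(t)=\Fbf_{\hat\Mbf}(t/n)^n$, i.e.\ multiplication by $\chi_\xi(\sqrt{t/n}\,x)^n$, and from the same pointwise CLT computation.

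The paper does not estimate the multipliers directly. It instead checks the hypotheses of the Chernoff product formula for $V(t)=\Fbf_{\hat\Mbf}(t)$:
\begin{itemize}
\item $V(0)=\mathbf I$ and $\norm{V(t)}\le 1$;
\item continuity on the basis vectors $\one_p$;
\item existence of the derivative at $0$ on the dense subspace $\hat\Hcal_{fin}$ of finitely supported functions, where it equals multiplication by $-\frac12 Dx^2$.
\end{itemize}
It then reads off both the generator of the limit semigroup and the convergence uniform on segments from Chernoff's theorem.

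You instead exploit that every operator in sight is diagonal in the basis $\{\one_p\}$. The telescoping bound $\abs{a^n-b^n}\le n\abs{a-b}$ gives uniform convergence of the multipliers on $[0,T]\times[-R,R]$. An $\ell_2$ tail truncation then turns this into SOT convergence uniformly in $t$.

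What each route buys:
\begin{itemize}
\item Your route is elementary and self-contained. It also sidesteps a hypothesis of Chernoff's theorem that the paper leaves implicit: in the Engel--Nagel form one needs $(\lambda_0-\hat{\mathbf L})\hat\Hcal_{fin}$ to be dense. That does hold here, since multiplication by $\lambda_0+\frac12Dx^2\neq 0$ maps finitely supported functions onto themselves.
\item The paper's route identifies the generator explicitly. It fits the authors' stated aim of treating these semigroups via Chernoff approximations, and it would still work when the averaged factors do not commute.
\end{itemize}

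One small repair is needed. Your step $\varepsilon(\sqrt{t/n}\,x)\le\varepsilon(\sqrt{T/n}\,R)$ tacitly assumes $\varepsilon$ is monotone. Replace $\varepsilon$ by its monotone majorant $\bar\varepsilon(r):=\sup_{0<\abs{s}\le r}\varepsilon(s)$, which still tends to $0$ as $r\to 0$. The uniform estimate then holds exactly as you wrote it.
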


\begin{proof}
    The pointwise convergence follows since
    \[
        \Expect\hat\Ubf_n(t) = \left( \Fbf_{\hat\Mbf}\left(\frac{t}{n}\right) \right) ^n,
    \]
    where the last is the operator of the multiplication on the function
    \[
        x \mapsto \chi_\xi\left(\frac{\sqrt{t}x}{\sqrt{n}}\right)^n = \left(1-\frac{tDx^2}{2n} + o(1/n)\right)^n \xrightarrow[n\to\infty]{} e^{-\frac{1}{2}tDx^2}.
    \]
    
    For the uniform convergence we need to apply the Chernoff Product Formula.
    It is clear that \(\Fbf_{\hat\Mbf}(0)\) is the identity operator, and that \(\norm{\Fbf_{\hat\Mbf}}_{B(\hat\Hcal)}\leq 1\) for all \(t\geq 0\).
    Moreover, since \(\Fbf_{\hat\Mbf}(t)\) acts like a multiplication by the characteristic function \(\chi_\xi\) (by Proposition~\ref{p:chi-mult}), 
    for any \(\one_p \in \hat\Hcal\) we have
    \[
        \norm{\Fbf_{\hat\Mbf}(t+s)\one_p - \Fbf_{\hat\Mbf}(t)\one_p}_{\hat\Hcal} = \abs{\chi_\xi(\sqrt{t+s}p) - \chi_\xi(\sqrt{t}p)}.
    \]
    This means that \(\Fbf_{\hat\Mbf}(t)\) is uniformly countinuous on the functions \(\one_p\).
    Since these functions form the orthonormal basic in \(\hat\Hcal\), we obtain that \(\Fbf_{\hat\Mbf}(t)\) is countinuous on the functions \(\hat u\in\hat\Hcal\), and uniformly continuous on functions \(\hat u \in \hat\Hcal_{fin}\), where \(\hat\Hcal_{fin} \subset \hat\Hcal\) is the dense subspace of functions of the kind
    \[
        \sum_{i=1}^n c_i \one_{p_i} \in \hat\Hcal_{fin}.
    \]
    Finally, denoting by $\hat{\mathbf{L}}\colon \hat\Hcal_{fin}\to\hat\Hcal_{fin}$ the operator of multiplication on the function
    \[
        x\mapsto -\frac{1}{2}Dx^2,
    \]
    for \(\hat u\in\hat\Hcal_{fin}\) we obtain that the limit
    \[
        \lim_{t\to+0} \frac{\Fbf_{\hat\Mbf}(t)\hat u - \hat u}{t} = \hat{\mathbf{L}} \hat u
    \]
    exists.
    Hence, by the Chernoff Product Formula \cite{Chernoff-1968} (see also \cite[Theorem~5.2]{Engel-Nagel-sec3}), we have that \(\hat{\mathbf{L}}\) is closable, and its closure generates the bounded strongly continuous semigroup \(e^{t\hat{\mathbf{L}}}\), which is given by
    \[
        e^{t\hat{\mathbf{L}}}\hat u = \lim_{n\to\infty} \Fbf_{\hat\Mbf}\left(\frac{t}{n}\right)^n \hat u = \lim_{n\to\infty} \Expect\hat\Ubf_n(t)\hat u
    \]
    for \(\hat u\in\hat\Hcal\) and uniformly for \(t\) in any segment.
\end{proof}

\begin{lemma}
    Suppose that \(\Expect\xi = 0\) and \(0 < \Expect\xi^2 =: D < +\infty\).
    Then the sequence of \(\Expect\Ubf_n(t)\) converges in SOT of \(B(\Hcal^t)\) uniformly on any segment of \(\R_+\) to the semigroup of the operators of multiplication on the function
    \[
        \Fcal^{-1}(x\mapsto e^{-\frac{1}{2}tDx^2})\Fcal.
    \]
\end{lemma}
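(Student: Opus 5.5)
The plan is to transfer the preceding lemma from \(\hat\Hcal\) to \(\Hcal^t\) by unitary equivalence, rather than repeating the Chernoff argument.

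First we record that \(\Fcal\) restricted to \(\Hcal^t\) is a unitary map onto \(\hat\Hcal\), with inverse \(\Fcal^{-1}\) mapping \(\hat\Hcal\) onto \(\Hcal^t\). Next we check that each shift \(\Sbf_h\) leaves \(\Hcal^t\) invariant. This holds because \(\Sbf_h f_p = e^{iph}f_p\) and the \(f_p\) form an orthonormal basis of \(\Hcal^t\). On \(\Hcal^t\) we therefore have \(\Fcal\Sbf_h\Fcal^{-1} = \hat\Mbf_h\), since both sides multiply \(\hat f_p=\one_p\) by \(e^{iph}\). Taking compositions gives \(\Ubf_n(t)|_{\Hcal^t} = \Fcal^{-1}\hat\Ubf_n(t)\Fcal\) pointwise in \(\omega\).

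We then pass the expectation through the conjugation, aiming for
\[
    \Expect\Ubf_n(t) = \Fcal^{-1}\,\Expect\hat\Ubf_n(t)\,\Fcal \quad\text{on } \Hcal^t.
\]
The expectation is understood in the weak (Pettis) sense: \((\Expect\Ubf_n(t)u,v)=\Expect(\Ubf_n(t)u,v)\). Since \(\Fcal\) is unitary, \((\Ubf_n(t)u,v) = (\hat\Ubf_n(t)\Fcal u,\Fcal v)\), so the identity follows directly. The integrand is bounded by \(\norm{u}\norm{v}\), so integrability is automatic. Measurability in \(\omega\) follows from the explicit formula \((\hat\Ubf_n(t)\hat u,\hat v)=\sum_k e^{i\sqrt{t/n}(\xi_1+\dots+\xi_n)p_k}c_k\bar d_k\).

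Finally, for \(u\in\Hcal^t\) and \(t\) in a segment \([0,T]\) we estimate
\[
    \norm{\Expect\Ubf_n(t)u - \Fcal^{-1}e^{-\frac12 tDx^2}\Fcal u} = \norm{\Expect\hat\Ubf_n(t)\Fcal u - e^{-\frac12 tDx^2}\Fcal u}_{\hat\Hcal}.
\]
By the previous lemma, the right-hand side tends to zero uniformly in \(t\in[0,T]\). The limit family \(\Fcal^{-1}e^{t\hat{\mathbf{L}}}\Fcal\) is a strongly continuous semigroup on \(\Hcal^t\), because it is unitarily conjugate to one. This proves the statement, reading ``multiplication on the function'' as the conjugated multiplication operator.

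The only delicate step is the invariance of \(\Hcal^t\) and the exchange of expectation with conjugation. Both are handled by working on the basis \(f_p\) and with weak integrals. The analytic content has already been done on the Fourier side by the Chernoff formula.
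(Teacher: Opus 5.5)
Your proposal is correct and takes the same approach as the paper. The paper proves the lemma in one line, by invoking the unitary equivalence (via \(\Fcal\)) of \(\Sbf_h\) on \(\Hcal^t\) with \(\hat\Mbf_h\) on \(\hat\Hcal\) and transferring the preceding lemma; you fill in the details it leaves implicit, namely the invariance of \(\Hcal^t\), passing the weak expectation through the conjugation, and measurability of the integrand.
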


\begin{proof}
    This again follows due to the unitary equivalence of \(\Sbf_h\) and \(\hat\Mbf_h\) via the Fourier transform.
\end{proof}



\section{Quantum channels walks}\label{s:q-walks}

In order to understand where the quantum state goes to in the process of averaging of unitary transforms \(\Ubf_n(t)\), we consider the random evolution of quantum channels.
First we give some general theory on quantum states.
For additional overview see, for example, \cite{Volovich-Sakbaev-2018}.

Consider the Banach algebra \(C(\hat\Hcal)\subset B(\hat\Hcal)\) of bounded linear operators on \(\hat\Hcal\), generated by
\begin{itemize}
    \item operators of multiplication on bounded measurable functions \(\R\to\Cx\), and 
    \item operators \(\mathbf{A}\), obtained as Pettis integral over some bounded countably additive measure \(m\colon 2^\R\to\Cx\) by
    \[
        \mathbf{A} = \int \hat\Sbf_a \,m(da).
    \]
\end{itemize}
In particular, this Banach algebra \(C(\hat\Hcal)\) contains the shift and the multiplication operators from the canonical commutative Weyl relations;
see in Example~\ref{ex:bad} the explanation of the choice of such Banach algebra.

Denote by \(\Sigma(\hat\Hcal) := S^{1+}(C(\hat\Hcal)^*)\) the space of \defing{quantum states}, that is, the linear functionals \(C(\hat\Hcal) \to \Cx\) lying on the intersection of the positive cone and the unit sphere.
We denote the \defing{action} of \(\rho \in \Sigma(\hat\Hcal)\) on \(\mathbf{A}\in C(\hat\Hcal)\) by
\[
    \braket{\rho, \mathbf{A}}.
\]

In quantum mechanics naturally arise the subspace \(\Sigma_n(\hat\Hcal) \subset S^{1+}(\Tcal_1(\hat\Hcal))\) of \defing{normal quantum states}, in other words, the nucleus operators with trace. The equivalent definition of normal states is that they are $*$-weakly continuous quantum states.
We say that a state \(\rho=\rho_{\hat u}\) is a \defing{vector pure state} if for any \(\mathbf{A}\in C(\hat\Hcal)\)
\[
    \Braket{\rho_{\hat u}, \mathbf{A}} = (\hat u, \mathbf{A}\hat u).
\]
So, this is a normal pure state.

For any \(\hat{v}\in\hat\Hcal\) denote by \(\mathbf{P}_{\hat v}\) the \defing{projector} on the pure state \(\hat{v}\).
It is the multiplication operator on the bounded measurable function, and can be described by
\[
    (\hat{u}, \mathbf{P}_{\hat v}\hat{u}) = \abs{(\hat u, \hat v)}^2.
\]
It is well known that a state \(\rho\) is normal if
\[
    \sup_{\mathbf{P}_{fin}} \Braket{\rho, \mathbf{P}_{fin}} =1,
\]
where the supremum is taken over projectors on finitely-dimensional subspaces.
If this equality does not hold, then \(\rho\) has a singular part.
So, we say that a state \(\rho\) lies in the space \(\Sigma_s(\hat\Hcal)\) of \defing{singular quantum states} if
\[
    \Braket{\rho, \mathbf{P}_{\hat v}} =0
    \quad\text{for any \(\hat{v}\in\hat\Hcal\)}.
\]
The Yosida--Hewitt decomposition \cite{Yosida-Hewitt} claims that
\[
    \Sigma(\hat\Hcal) = \Sigma_n(\hat\Hcal) \oplus \Sigma_s(\hat\Hcal),
\]
meaning that for any \(\rho\in\Sigma(\hat\Hcal)\) there is a convex combination
\[
    \rho = p\rho_n + (1-p)\rho_s
\]
for some \(p\in [\,0,1\,]\), \(\rho_n\in\Sigma_n(\hat\Hcal)\), and \(\rho_s\in\Sigma_s(\hat\Hcal)\).

For \(h\in\R\), we define the \defing{quantum channel} \(\Tbf_h\colon \Sigma(\hat\Hcal)\to\Sigma(\hat\Hcal)\) as in \cite{AmosovSakbaev} by the action
\[
    \Braket{\Tbf_h\rho, \mathbf{A}} := \Braket{\rho, \hat\Sbf_h^*\mathbf{A}\hat\Sbf_h}.
\]
We will write shortly
\[
    \Tbf_h\rho = \hat\Sbf_h\rho\hat\Sbf_h^*.
\]
Such notation is consistent with the definition of quantum channels via Krauss operators in quantum mechanics.

\begin{proposition}\label{p:channel-act}
    Let \(\hat u\in\hat\Hcal\) and \(\rho=\rho_{\hat u}\) be a pure state.
    Let \(h\in\R\) and \(\mathbf{A}\in B(\hat\Hcal)\). Then
    \[
        \Braket{\Tbf_h\rho,\mathbf{A}} = (\hat\Sbf_h\hat{u}, \mathbf{A}\hat\Sbf_h\hat{u}).
    \]
\end{proposition}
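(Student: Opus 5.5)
The statement is essentially an unwinding of definitions, and I would prove it by composing the definition of the channel \(\Tbf_h\) with the definition of the vector pure state \(\rho_{\hat u}\), followed by a single adjoint manipulation. By the definition of \(\Tbf_h\),
\[
    \Braket{\Tbf_h\rho_{\hat u},\mathbf{A}} = \Braket{\rho_{\hat u}, \hat\Sbf_h^*\mathbf{A}\hat\Sbf_h}.
\]
Since \(\rho_{\hat u}\) is a vector pure state, its action on the operator \(\hat\Sbf_h^*\mathbf{A}\hat\Sbf_h\) is given by \((\hat u, \hat\Sbf_h^*\mathbf{A}\hat\Sbf_h\hat u)\). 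Moving \(\hat\Sbf_h^*\) across the inner product as \(\hat\Sbf_h\) on the left argument then gives
\[
    (\hat\Sbf_h\hat u, \mathbf{A}\hat\Sbf_h\hat u),
\]
which is exactly the claim.

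The only point that needs care is that the pure state \(\rho_{\hat u}\) was defined through its values on \(C(\hat\Hcal)\), while the proposition is stated for arbitrary \(\mathbf{A}\in B(\hat\Hcal)\). I would handle this in one of two ways.

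\emph{First option.} Interpret \(\rho_{\hat u}\) via the same formula \(\mathbf{B}\mapsto(\hat u,\mathbf{B}\hat u)\) on all of \(B(\hat\Hcal)\). This is the natural normal extension, given by the rank-one trace-class operator associated with \(\hat u\). With this reading the computation above applies verbatim to \(\mathbf{B}=\hat\Sbf_h^*\mathbf{A}\hat\Sbf_h\).

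\emph{Second option.} If one insists on \(\mathbf{A}\in C(\hat\Hcal)\), check that \(C(\hat\Hcal)\) is invariant under conjugation by \(\hat\Sbf_h\), so that \(\hat\Sbf_h^*\mathbf{A}\hat\Sbf_h\) lies in the domain of \(\rho_{\hat u}\). Note that \(\hat\Sbf_h\) is unitary with \(\hat\Sbf_h^*=\hat\Sbf_{-h}\). Conjugating multiplication by a bounded measurable function \(g\) gives multiplication by the shifted function \(g(\cdot-h)\) or \(g(\cdot+h)\), which is again bounded measurable. Conjugating \(\int\hat\Sbf_a\,m(da)\) returns the same operator, since the shifts commute with one another. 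Hence the generators are mapped into \(C(\hat\Hcal)\), and by continuity and multiplicativity of conjugation so is the whole algebra.

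Apart from this domain issue I expect no obstacle. The unitarity of \(\hat\Sbf_h\) on \(\hat\Hcal\), needed for the adjoint step, is immediate because the shift permutes the basis \(\{\one_p\}\).
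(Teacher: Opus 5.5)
Your argument is correct and is the same definitional unwinding the paper uses: the paper simply notes that \(\Tbf_h\rho_{\hat u}=\hat\Sbf_h\rho_{\hat u}\hat\Sbf_h^*=\rho_{\hat\Sbf_h\hat u}\) is again a vector pure state. Your check that \(C(\hat\Hcal)\) is invariant under conjugation by \(\hat\Sbf_h\) (or, alternatively, reading \(\rho_{\hat u}\) as the vector functional on all of \(B(\hat\Hcal)\)) spells out a domain point the paper leaves implicit; note that the adjoint step itself needs only boundedness of \(\hat\Sbf_h\), not its unitarity.
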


\begin{proof}
    This holds by definition of a pure state, since the state
    \(
        \Tbf_h\rho = \hat\Sbf_h\rho_{\hat u}\hat\Sbf_h^* = \rho_{\hat\Sbf_h\hat{u}}
    \)
    is pure.
\end{proof}

Let \(\xi\colon\Omega\to\R\) be a random variable as in the previous paragraph.
Let \(\rho\in\Sigma(\hat\Hcal)\) be a pure vector state.
As it was shown, for $\xi$ without a discrete part we have \(\Expect \hat\Sbf_{\xi} \hat u=0\).
Nevertheless, the averaging of quantum states \(\Tbf_\xi\rho\) is a quantum state, that is, a linear continuous functional \(\Expect \Tbf_\xi\rho\) on the Banach algebra \(C(\hat\Hcal)\), which is non-negative on the cone \(C^+(\hat\Hcal)\) of non-negative operators, and normed by the condition
\[
    \braket{\Expect\Tbf_\xi\rho, \mathbf{I}} = 1.
\]
This average is given by the Pettis integral; that is, for \(\mathbf{A}\in C(\hat\Hcal)\) is defined by
\[
    \Braket{\Expect\Tbf_\xi\rho, \mathbf{A}} :=
    \Expect\Braket{\Tbf_\xi\rho, \mathbf{A}}.
\]

In the following lemmas~\ref{l:mult-good} and~\ref{l:shift-good}, and Corollary~\ref{cor:shift-good}, we prove the correctness of the definition.
Note that it is not obvious due to Example~\ref{ex:bad} below.

\begin{example}\label{ex:bad}
    Suppose that we defined quantum channels on the whole \(B(\hat\Hcal)\).
    Take \(\mathbf{A}\) being the multiplication on the indicator of non-measureble subset \(V\subset \R\).
    Then for \(\rho=\rho_{\one_p}\)
    \[
        \Braket{\Tbf_\xi\rho, \mathbf{A}} = \one_V(\xi),
    \]
    and thus the Pettis integral \(\Expect\Tbf_\xi\rho\) is undefined.
    That is why we consider the Banach algebra \(C(\hat\Hcal)\).
\end{example}


\begin{lemma}\label{l:mult-good}
    Let \(\hat u\in\hat\Hcal\) be of the form
    \[
        \hat u=\sum\limits_{k=1}^{\infty}c_k\one_{p_k},
    \]
    and \(\rho=\rho_{\hat u}\) be a pure state.
    Denote by \(\hat\Mbf_f\) the multiplication operator on a bounded measurable function \(f\in\Lcal_\infty(\R,\lambda)\).
    Then
    \[
        \Braket{\Expect{\Tbf}_\xi\rho,\hat\Mbf_f} = \Braket{\rho, \hat\Mbf_g},
    \]
    where
    \[
        g(x) = \Expect f(\xi-x).
    \]
\end{lemma}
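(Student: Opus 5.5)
The plan is a direct computation: evaluate the random quantity $\braket{\Tbf_\xi\rho,\hat\Mbf_f}$ pointwise in $\omega$ using Proposition~\ref{p:channel-act}, and then take the expectation term by term. By Proposition~\ref{p:channel-act}, for each fixed value $h=\xi(\omega)$,
\[
    \Braket{\Tbf_h\rho,\hat\Mbf_f} = (\hat\Sbf_h\hat u, \hat\Mbf_f\hat\Sbf_h\hat u) = \int \abs{\hat u(y+h)}^2 f(y)\,\nu(dy).
\]
Since $\hat u$ is supported on the countable set $\{p_k\}$ with coefficients $c_k$, the function $y\mapsto\hat u(y+h)$ is supported on $\{p_k-h\}$. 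The integral against the counting measure therefore reduces to the series $\sum_k \abs{c_k}^2 f(p_k-h)$. The same reasoning applied to $\rho$ itself gives $\braket{\rho,\hat\Mbf_g}=\sum_k\abs{c_k}^2 g(p_k)$.

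Next I take expectations. Substituting $h=\xi$ gives the random series $\sum_k \abs{c_k}^2 f(p_k-\xi)$. It is dominated termwise by $\abs{c_k}^2\norm{f}_\infty$, which is summable because $\{c_k\}\in\ell_2$. Dominated convergence (or Fubini for the sum against $\Pr$) then justifies interchanging $\Expect$ and $\sum_k$, which yields $\sum_k\abs{c_k}^2\,\Expect f(p_k-\xi)$. Comparing with the expression for $\braket{\rho,\hat\Mbf_g}$ identifies $g$ at the points $p_k$ as the $\xi$-average of the shifted $f$. I will also check that $g$ is bounded by $\norm{f}_\infty$ and measurable; measurability follows by writing $g(x)=\int f(\cdot)\,P_\xi(dy)$ and applying Fubini to the Borel function $(x,y)\mapsto f(\pm(y-x))$. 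This makes $\hat\Mbf_g\in C(\hat\Hcal)$ and the right-hand side meaningful.

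The main obstacle is measurability, together with the sign convention. First, for each $k$, $\omega\mapsto f(p_k-\xi(\omega))$ must be a random variable. I would fix a Borel representative of $f$, since the multiplication operator on $\hat\Hcal$ sees values of $f$ at individual points. This is exactly the issue of Example~\ref{ex:bad}, so the statement should be read with $f$ Borel measurable rather than merely as a Lebesgue class. Second, the orientation of the shift $\hat\Sbf_h\hat u(x)=\hat u(x+h)$ determines whether the computation produces $f(p_k-\xi)$ or $f(\xi-p_k)$. With the definition above the direct computation gives $\Expect f(x-\xi)$. Matching the form $g(x)=\Expect f(\xi-x)$ of the statement requires either the opposite orientation convention for $\Tbf_h$, namely $\hat\Sbf_h^*$ in place of $\hat\Sbf_h$, or replacing $\xi$ by $-\xi$. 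I will check this convention carefully and state explicitly which one is used, so that the formula holds as written.
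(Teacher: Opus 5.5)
Your computation is correct. With the paper's definition $\hat\Sbf_h\hat u(x)=\hat u(x+h)$, so that $\hat\Sbf_h\one_p=\one_{p-h}$, it proves $\braket{\Expect\Tbf_\xi\rho,\hat\Mbf_f}=\sum_k\abs{c_k}^2\,\Expect f(p_k-\xi)=\braket{\rho,\hat\Mbf_g}$ with $g(x)=\Expect f(x-\xi)$.

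What fails is your last step, the promise to pick a convention ``so that the formula holds as written''. Your answer and the stated $g(x)=\Expect f(\xi-x)$ differ by the reflection $f\mapsto f(-\,\cdot\,)$. Reversing the orientation of the shift (using $\hat\Sbf_h^*$, or replacing $\xi$ by $-\xi$) only produces $\Expect f(x+\xi)$, which is still not the stated function. The case $\xi\equiv0$ settles this. There $\Tbf_0$ is the identity under every convention, so the left-hand side is $\sum_k\abs{c_k}^2f(p_k)$, while the stated right-hand side is $\sum_k\abs{c_k}^2f(-p_k)$; for $\hat u=\one_1$ and $f=\one_{[0,\infty)}$ these are $1$ and $0$. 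An atomless law does not rescue it either. For $\xi$ uniform on $[0,1]$, $f=\one_{[0,\infty)}$ and $\hat u=\one_{-1/2}$, the left-hand side is $0$ (or $1/2$ with the $\hat\Sbf_h^*$ convention), and the stated right-hand side is $1$. These values do not depend on the representative of $f$. So the lemma cannot be proved as stated. It should be amended to $g(x)=\Expect f(x-\xi)$, or $\Expect f(x+\xi)$ under the convention $\hat\Sbf_h\one_p=\one_{p+h}$, and that amended version is exactly what your argument establishes.

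Otherwise your argument is a streamlined version of the paper's. The paper first treats $\hat u=\one_0$ with $f=\one_A$ and passes to simple and then bounded $f$ by linearity and a limit. It then moves to $\hat u=\one_{p_0}$ and finally sums $\sum_k\abs{c_k}^2\,\Expect f(\cdot)$, interchanging $\Expect$ with the series without comment. Evaluating $f$ directly at the countably many points $p_k-\xi$ makes the simple-function approximation unnecessary and supplies the dominated-convergence justification the paper omits. Your insistence on a fixed Borel representative of $f$, rather than a class in $\Lcal_\infty(\R,\lambda)$, is also a genuine correction, since multiplication on $\hat\Hcal$ sees point values.

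The paper's proof carries the very sign slip you detected. It uses $\hat\Sbf_h\one_0=\one_h$, whereas the definition gives $\one_{-h}$. It then writes $\Expect f(\xi-p_0)$ for $\hat u=\one_{p_0}$, which does not follow even from its own convention $\hat\Sbf_h\one_x=\one_{x+h}$ (that gives $\Expect f(\xi+p_0)$). The error is harmless downstream: the later uses of the lemma (well-definedness of the Pettis integral, and independence of $\Expect\Tbf_\xi\rho$ from the choice of $\mu$) only need some bounded Borel $g$ not depending on $\hat u$.
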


\begin{proof}
    First consider the case  \(\hat u=\one_0\).
    The idea of the proof is classical: to consider \(f\) being an indicator, then a simple function, and finally arbitrary function.
    So, let $A\in\mathcal{B}(\R)$ be a Borel subset of $\R$, and put \(f = \one_A\).
    Then \(\hat\Mbf_f\) is the orthogonal projector in \(\hat\Hcal\), defined as a multiplication operator on the indicator function \(\one_A\).
    Hence we have
    \[
        \Braket{\Expect{\Tbf}_\xi\rho,\hat\Mbf_f} =
        \Expect (\hat\Sbf_{\xi}\hat u, \hat\Mbf_f\hat\Sbf_{\xi}\hat u)=
        \Expect \one_A(\xi) =
        \Pr\{\xi\in A\},
    \]
    where
    \begin{itemize}
        \item the first equality is by definition of the Pettis integral, and by Proposition~\ref{p:channel-act},
        \item the second equality holds since \((\hat\Sbf_h \one_0, \hat\Mbf_f\hat\Sbf_h\one_0) = (\one_h, \one_A\one_h) = \one_A(h)\), and proves the measurability of \(\Braket{{\Tbf}_\xi\rho,\hat\Mbf_f}\), and
        \item the third equality is by definition of the expected value.
    \end{itemize}
    Thus, by linearity, for a simple function $f$ we have
    \[
        \Braket{\Expect{\Tbf}_\xi\rho,\hat\Mbf_f} =
        \Expect (\hat\Sbf_{\xi}\hat u, \hat\Mbf_f\hat\Sbf_{\xi}\hat u)=
        \Expect f(\xi).
    \]
    Hence, by taking the limit, we have the analogous equality for any measurable bounded function $f$.

    Second, if \(u=\one_{p_0}\) for some \(p_0\in\R\), then
    \[
        \Braket{\Expect{\Tbf}_\xi\rho,\hat\Mbf_f} =
        \Expect f(\xi-p_0).
    \]
    
    Finally, consider arbitrary
    \[
        \hat u=\sum\limits_{k=1}^{\infty}c_k\one_{p_k},
    \]
    where \(\{p_k\}\colon\N\to\R\), and \(\{c_k\}\in \ell_2(\N,\Cx)\).
    Then
    \[
        (\one_{p_j}, \hat\Mbf_f\one_{p_k}) = f(p_k)(\one_{p_j}, \one_{p_k}) = f(p_k)\one_{p_j}(p_k),
    \]
    which implies
    \[
        \Braket{\rho, \hat\Mbf_f} = \sum_{k=1}^\infty \abs{c_k}^2 f(p_k).
    \]
    Thus
    \[
        \Braket{\Expect{\Tbf}_\xi\rho,\hat\Mbf_f} = \sum_{k=1}^\infty \abs{c_k}^2 \Expect f(\xi-p_k) = \Braket{\rho, \hat\Mbf_g}.
    \]
\end{proof}

\begin{lemma}\label{l:shift-good}
    Let \(\hat u\in\Hcal\) and \(\rho=\rho_{\hat u}\) be a pure state.
    Then the equality
    \[
        \Braket{\Expect\Tbf_\xi\rho, \hat\Sbf_a} = \Braket{\rho, \hat\Sbf_a}
    \]
    holds.
    Moreover, these terms are non-zero for $a\in R$ from some countable subset \(R\subset \R\), and are independent of $t$.
\end{lemma}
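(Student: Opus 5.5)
The key observation is that every shift \(\hat\Sbf_h\) commutes with every other shift \(\hat\Sbf_a\), since both are translations of the argument. Proposition~\ref{p:channel-act} then gives, for every fixed \(h\in\R\),
\[
    \Braket{\Tbf_h\rho,\hat\Sbf_a} = (\hat\Sbf_h\hat u, \hat\Sbf_a\hat\Sbf_h\hat u) = (\hat\Sbf_h\hat u, \hat\Sbf_h\hat\Sbf_a\hat u) = (\hat u, \hat\Sbf_a\hat u) = \Braket{\rho,\hat\Sbf_a},
\]
where the third equality uses the unitarity of \(\hat\Sbf_h\) on \(\hat\Hcal\).

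This quantity does not depend on \(h\), so the function \(\omega\mapsto\Braket{\Tbf_{\xi(\omega)}\rho,\hat\Sbf_a}\) is constant. It is therefore trivially measurable and integrable, and the Pettis integral defining \(\Braket{\Expect\Tbf_\xi\rho,\hat\Sbf_a}\) equals that constant. This gives the equality. The same holds with \(\xi\) replaced by \(\sqrt t\xi\), or by a sum of such variables, so the value is independent of \(t\).

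For the countability claim, I would write \(\hat u=\sum_k c_k\one_{p_k}\) with the \(p_k\) distinct and the set \(M_u=\{p_k\}\) countable. Then
\[
    (\hat u,\hat\Sbf_a\hat u)=\sum_{j,k} \bar c_j c_k\,(\one_{p_j},\one_{p_k-a}).
\]
Each term vanishes unless \(p_j=p_k-a\), that is, unless \(a\in R:=\{p_k-p_j : j,k\in\N\}\). This set \(R\) is countable, being the image of \(\N\times\N\).

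The only delicate point is exchanging the inner product with the infinite sums. This follows from absolute convergence, since \(\sum_{j,k}|c_j||c_k|\,\one_{p_j}(p_k-a)\le\|c\|_{\ell_2}^2\) by Cauchy--Schwarz applied to the (at most one-to-one) matching of indices. I do not expect any real obstacle here.
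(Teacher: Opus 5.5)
Your proof is correct and follows the same route as the paper: show that $h\mapsto\Braket{\Tbf_h\rho,\hat\Sbf_a}$ is the constant $(\hat u,\hat\Sbf_a\hat u)$, which gives both measurability and the equality, since the expectation of a constant is that constant. The difference is cosmetic. The paper checks the $h$-independence on the basis vectors $\one_x$ and then expands $\hat u$, whereas you obtain it directly from $\hat\Sbf_a\hat\Sbf_h=\hat\Sbf_h\hat\Sbf_a$ and unitarity, and you also make explicit the countable set $R=\{p_k-p_j\}$, which the paper leaves implicit.
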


\begin{proof}
    Clearly,
    \[
        (\one_{x}, \one_y) = \delta_{x,y},
    \]
    where \(\delta\) is the Kronecker delta.
    Then for any \(h\in\R\)
    \[
        (\hat\Sbf_h\one_{x}, \hat\Sbf_a\hat\Sbf_h\one_{y}) = (\one_{x+h}, \one_{y+h+a}) = \delta_{x+h,y+h+a} = \delta_{x,y+a}.
    \]
    In particular,
    \[
        (\hat\Sbf_h\one_{x}, \hat\Sbf_a\hat\Sbf_h\one_{y}) = (\one_{x}, \hat\Sbf_a\one_{y}).
    \]
    This means the independence of our terms of shifts.
    Then for
    \[
        \hat u = \sum_{j=1}^\infty c_j\one_{p_j}
    \]
    we have
    \[
        (\hat\Sbf_h\hat u, \hat\Sbf_a\hat\Sbf_h\hat u) =
        \sum_{j,k} c_j\overline{c_k} (\hat\Sbf_h\one_{p_j}, \hat\Sbf_a\hat\Sbf_h\one_{p_k}) =
        \sum_{j,k} c_j\overline{c_k} (\one_{p_j}, \hat\Sbf_a\one_{p_k}) =
        (\hat u, \hat\Sbf_a\hat u).
    \]
    Thus,
    \[
        \Braket{\Expect{\Tbf}_\xi\rho,\hat\Sbf_a} =
        \Expect (\hat\Sbf_{\sqrt{t}\xi}\hat u, \hat\Sbf_a\hat\Sbf_{\sqrt{t}\xi}\hat u)=
        \Expect (\hat u, \hat\Sbf_a\hat u) =
        (\hat u, \hat\Sbf_a\hat u) = \Braket{\rho, \hat\Sbf_a},
    \]
    where
    \begin{itemize}
        \item the first equality is by definition of the Pettis integral, and by Proposition~\ref{p:channel-act},
        \item the second equality follows, since we just proved the insignificance of the shift, and proves the measurability of \(\Braket{{\Tbf}_\xi\rho,\hat\Sbf_a}\), and
        \item the third equality holds since the expected value of a constant is the constant,
        \item the last equality is by definition of a pure vector state.
    \end{itemize}
\end{proof}

\begin{corollary}\label{cor:shift-good}
    Let \(\hat u\in\hat\Hcal\) and \(\rho=\rho_{\hat u}\).
    Let an operator \(\mathbf{A} = \int \hat \Sbf_a\,m(da)\) be a convolution of \(\hat\Sbf_a\) with some other countably additive measure $m$.
    Then
    \[
        \Braket{\Expect{\Tbf}_\xi\rho,\mathbf{A}} = \Braket{\rho, \mathbf{A}},
    \]
    and the last expression is non-zero only for measures $m$ with atoms.
\end{corollary}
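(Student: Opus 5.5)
The first identity follows from Lemma~\ref{l:shift-good} by integrating against $m$ and swapping two integrals; the second comes from computing $(\hat u,\mathbf{A}\hat u)$ explicitly in the basis $\{\one_p\}$.

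\emph{Pointwise identity.} Fix $h\in\R$ and a unit vector $\hat u=\sum_j c_j\one_{p_j}$. I interpret $\mathbf{A}=\int\hat\Sbf_a\,m(da)$ weakly. Then, by the computation in the proof of Lemma~\ref{l:shift-good},
\[
(\hat\Sbf_h\hat u,\mathbf{A}\hat\Sbf_h\hat u)=\int(\hat\Sbf_h\hat u,\hat\Sbf_a\hat\Sbf_h\hat u)\,m(da)=\int(\hat u,\hat\Sbf_a\hat u)\,m(da)=(\hat u,\mathbf{A}\hat u).
\]
The integrand is bounded by $\norm{\hat u}^2=1$, and $m$ has finite total variation, so all integrals are finite. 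Thus $h\mapsto\braket{\Tbf_h\rho,\mathbf{A}}$ is constant in $h$. This gives its measurability in $\omega$ for free, and $\Expect\braket{\Tbf_\xi\rho,\mathbf{A}}=\braket{\rho,\mathbf{A}}$ by the definition of the Pettis average. This is the same Fubini-type swap as in Lemma~\ref{l:shift-good}, only with $\int\cdot\,m(da)$ inserted before taking the expectation.

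\emph{Vanishing without atoms.} I compute the matrix coefficients:
\[
(\hat u,\hat\Sbf_a\hat u)=\sum_{j,k}\overline{c_j}c_k(\one_{p_j},\one_{p_k-a})=\sum_{j,k}\overline{c_j}c_k\,\delta_{a,p_k-p_j}.
\]
This function of $a$ vanishes outside the countable set $R=\{p_k-p_j\}$. It is also absolutely summable, since $\sum_{j,k}|c_j||c_k|\,\one_{a=p_k-p_j}$ is bounded by $1$ for each fixed $a$.

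Hence $\braket{\rho,\mathbf{A}}=\int_R(\hat u,\hat\Sbf_a\hat u)\,m(da)$. If $m$ is non-atomic (i.e.\ has no atoms), then $m(\{a\})=0$ for every $a$. By countable additivity $|m|(R)=0$, so the integral vanishes. Consequently a non-zero value forces $m$ to have an atom in $R$, and in that case $\braket{\rho,\mathbf{A}}=\sum_{a\in R}m(\{a\})(\hat u,\hat\Sbf_a\hat u)$.

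\emph{Main obstacle.} The main difficulty is justifying the interchange $(\hat v,\mathbf{A}\hat w)=\int(\hat v,\hat\Sbf_a\hat w)\,m(da)$. The integrand $a\mapsto(\hat v,\hat\Sbf_a\hat w)$ is supported on a countable set, so it is Borel measurable. I would take this interchange as the defining property of the Pettis integral: $\mathbf{A}$ is the bounded operator given by the sesquilinear form on the right, with $\norm{\mathbf{A}}\le|m|(\R)$. Once this is accepted, both claims reduce to the computations above.
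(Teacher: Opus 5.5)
Your proposal is correct and follows the paper's route: you integrate the shift-invariance $(\hat\Sbf_h\hat u,\hat\Sbf_a\hat\Sbf_h\hat u)=(\hat u,\hat\Sbf_a\hat u)$ from Lemma~\ref{l:shift-good} against $m$, and for the atom claim you use that $a\mapsto(\hat u,\hat\Sbf_a\hat u)$ is supported on a countable set $R$. The differences are minor and in your favour: you apply the invariance at each fixed $h$ before averaging, which avoids the interchange of $\Expect$ with $\int\cdot\,m(da)$ behind the paper's step $\braket{\Expect\Tbf_\xi\rho,\mathbf{A}}=\int\braket{\Expect\Tbf_\xi\rho,\hat\Sbf_a}\,m(da)$, and you spell out the $|m|(R)=0$ argument for non-atomic $m$, which the paper only asserts.
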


\begin{proof}
    By definition of \(\mathbf{A}\),
    \[
        \Braket{\Expect{\Tbf}_\xi\rho,\mathbf{A}} = \int \Braket{\Expect{\Tbf}_\xi\rho,\hat\Sbf_a} \,m(da).
    \]
    Now apply Lemma~\ref{l:shift-good} to the integrand expression in order to obtain
    \[
        \Braket{\Expect{\Tbf}_\xi\rho,\mathbf{A}} = \int (\hat u, \hat\Sbf_a\hat u)\,m(da) = \Braket{\rho, \mathbf{A}}.
    \]
\end{proof}

Thus, the correctness of the definition of the quantum channel \(\Expect \Tbf_\xi\) on the pure states is proved, and we are ready to define it on all quantum states.
Denote by \(S^{1+}(ba(S^1(\hat\Hcal)))\) the Banach space of finitely additive non-negative normed measures with bounded variation on the measurable space \((S^1(\hat\Hcal),2^{S^1(\hat \Hcal)})\).
It is known, see for example \cite[Theorem~2.2]{AmosovSakbaev}, that any state \(\rho\in\Sigma(\hat\Hcal)\) can be represented as the Pettis integral
\[
    \rho = \int_{S^1(\hat\Hcal)} \rho_{\hat u}\, \mu(d\hat u),
\]
where \(\mu\in S^{1+}(ba(S^1(\hat\Hcal)))\).
Moreover, if our space \(\hat\Hcal\) were separable (which it is not) and \(\mu\) were countably additive, then the state $\rho$ would have been normal; and for a normal state, the measure \(\mu\) can be chosen to be countably additive.
So, the natural way to define the quantum channel \(\Expect\Tbf_\xi\) is
\[
    \Expect\Tbf_\xi\rho :=
        \int_{S^1(\hat\Hcal)} \Expect\Tbf_\xi\rho_{\hat u}\, \mu(d\hat u),
\]
where the integral is in the Pettis sense.

\begin{lemma}
    The definition of \(\Expect\Tbf_\xi\rho\) does not depend on the choice of \(\mu\).
\end{lemma}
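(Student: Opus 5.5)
The plan is to show that, on each generator of \(C(\hat\Hcal)\), the averaged channel \(\Expect\Tbf_\xi\rho\) equals \(\rho\) composed with a fixed map depending only on the generator, never on \(\mu\). Concretely, we define a linear map \(\Phbf\colon C(\hat\Hcal)\to C(\hat\Hcal)\) by
\[
    \Phbf(\hat\Mbf_f) := \hat\Mbf_g,\quad g(x)=\Expect f(\xi-x),
    \qquad
    \Phbf\Bigl(\int\hat\Sbf_a\,m(da)\Bigr) := \int\hat\Sbf_a\,m(da).
\]
Lemma~\ref{l:mult-good}, Lemma~\ref{l:shift-good} and Corollary~\ref{cor:shift-good} then say exactly that
\[
    \Braket{\Expect\Tbf_\xi\rho_{\hat u},\mathbf{A}} = \Braket{\rho_{\hat u},\Phbf(\mathbf{A})}
\]
for every pure vector state \(\rho_{\hat u}\) and every such generator \(\mathbf{A}\). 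The function \(g\) is bounded and measurable by Fubini, so \(\hat\Mbf_g\) is again a generator.

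Next we take two representations \(\rho=\int\rho_{\hat u}\,\mu_1(d\hat u)=\int\rho_{\hat u}\,\mu_2(d\hat u)\). For a generator \(\mathbf{A}\), the definition of the Pettis integral together with the identity above gives
\[
    \Braket{\int\Expect\Tbf_\xi\rho_{\hat u}\,\mu_i(d\hat u),\mathbf{A}}
    = \int\Braket{\rho_{\hat u},\Phbf(\mathbf{A})}\,\mu_i(d\hat u)
    = \Braket{\rho,\Phbf(\mathbf{A})}.
\]
The right-hand side is the same for \(i=1,2\). Integrability is not an issue, since \(\hat u\mapsto(\hat u,\Phbf(\mathbf{A})\hat u)\) is bounded by \(\norm{\Phbf(\mathbf{A})}\) on the unit sphere and finitely additive measures integrate bounded functions.

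The two candidate functionals are bounded, since each is a Pettis average of states and so has norm at most \(1\). They agree on the generators, so it remains to pass to all of \(C(\hat\Hcal)\): first to their linear span, then to products, then to the norm closure. Agreement on the closure follows from continuity once it holds on the dense subalgebra.

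The main obstacle is products of generators, for example \(\hat\Mbf_f\hat\Sbf_a\) or longer words. Here I would avoid defining \(\Phbf\) on the whole algebra, which need not be multiplicative. Instead, I would extend the identity directly at the level of pure states. For \(\hat u=\sum_k c_k\one_{p_k}\), the matrix elements \((\hat\Sbf_h\one_{p_j},\mathbf{W}\hat\Sbf_h\one_{p_k})\) of a word \(\mathbf{W}\) in the generators are computed explicitly. A shift by \(h\) conjugates each \(\hat\Mbf_f\) into \(\hat\Mbf_{f(\cdot+h)}\) and commutes with each \(\hat\Sbf_a\), so \(\Expect\) yields a matrix element of a word with every \(f\) replaced by an averaged function. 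This defines \(\Phbf\) on words as
\[
    \Phbf(\mathbf{W}) := \Expect\,\hat\Sbf_\xi^*\mathbf{W}\hat\Sbf_\xi,
\]
taken in the weak sense entrywise on the basis \(\{\one_p\}\). The same formula then gives \(\Braket{\rho,\Phbf(\mathbf{A})}\) independent of \(\mu\) for all \(\mathbf{A}\) in the dense subalgebra.

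The delicate points I expect are these. First, \(\Phbf(\mathbf{W})\) must be a bounded operator. Boundedness holds since \(\abs{(\hat v,\Phbf(\mathbf{W})\hat v)}\leq\norm{\mathbf{W}}\). Second, the measurability of the dependence on \(\xi\) must be checked, which follows as in the proofs of the lemmas above.
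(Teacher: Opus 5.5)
Your generator-level argument is the paper's proof. For $\mathbf{A}=\hat\Mbf_f$ the paper uses Lemma~\ref{l:mult-good} to replace it by $\hat\Mbf_g$, for $\mathbf{A}=\int\hat\Sbf_a\,m(da)$ it uses Corollary~\ref{cor:shift-good}, and then it integrates against $\mu_1$ and $\mu_2$. It calls these two cases ``sufficient'' and never treats products. You are right that products are needed. For $a\neq0$, the contraction $\mathbf{B}\mapsto\bigl(x\mapsto(\one_x,\mathbf{B}\one_{x+a})\bigr)$ into bounded functions sends every generator to a constant function, but sends $\hat\Mbf_f\hat\Sbf_a$ to $f$. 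So the closed span of the generators is strictly smaller than $C(\hat\Hcal)$.

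Your product step breaks at the claim that $\Expect$ ``yields a matrix element of a word with every $f$ replaced by an averaged function''. This is false once a word has two multiplication factors. We have $\hat\Sbf_h^*\hat\Mbf_{f_1}\hat\Sbf_a\hat\Mbf_{f_2}\hat\Sbf_h=\hat\Mbf_{F(\cdot-h)}\hat\Sbf_a$ with $F(x)=f_1(x)f_2(x+a)$, and $\Expect F(x-\xi)$ is the average of a product, not the product of averages. Already $\hat\Mbf_f\hat\Mbf_f=\hat\Mbf_{f^2}$ shows your rule is not well defined on operators.

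This matters because the step ``the same formula gives $\braket{\rho,\Phbf(\mathbf{A})}$ independent of $\mu$'' needs $\Phbf(\mathbf{W})\in C(\hat\Hcal)$. The state $\rho$ and the identity $\rho=\int\rho_{\hat u}\,\mu_i(d\hat u)$ exist only on $C(\hat\Hcal)$. For an operator $\mathbf{B}$ outside it, the $\mu_1$- and $\mu_2$-integrals of $(\hat u,\mathbf{B}\hat u)$ need not agree. Boundedness and measurability, the two points you flag, do not give this membership.

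The repair is normal ordering, with the paper's convention $\hat\Sbf_h\hat u(x)=\hat u(x+h)$:
\begin{itemize}
    \item On $\hat\Hcal$ one has $\int\hat\Sbf_a\,m(da)=\sum_a m(\{a\})\hat\Sbf_a$ with norm convergence, and $\hat\Sbf_a\hat\Mbf_f=\hat\Mbf_{f(\cdot+a)}\hat\Sbf_a$.
    \item Hence every word is an absolutely norm-convergent series of terms $c\,\hat\Mbf_F\hat\Sbf_b$, where $F$ is a bounded Borel product of translates of the $f_i$.
    \item Conjugation by $\hat\Sbf_h$ only translates $F$, and dominated convergence lets $\Expect$ pass inside the series.
    \item This gives $\Phbf(\hat\Mbf_F\hat\Sbf_b)=\hat\Mbf_G\hat\Sbf_b$ with $G(x)=\Expect F(x-\xi)$.
\end{itemize}
So $\Phbf$ maps the algebra generated by the generators into $C(\hat\Hcal)$. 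Since $\norm{\Phbf(\mathbf{A})}\leq\norm{\mathbf{A}}$ and measurability survives norm limits, $\Phbf$ maps all of $C(\hat\Hcal)$ into itself. Your displayed computation then holds for every $\mathbf{A}\in C(\hat\Hcal)$.
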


\begin{proof}
    It is sufficient to prove that
    \[
        \int_{S^1(\hat\Hcal)} \Braket{\Expect\Tbf_\xi\rho_{\hat u},\mathbf{A}}\, \mu_1(d\hat u) =
        \int_{S^1(\hat\Hcal)} \Braket{\Expect\Tbf_\xi\rho_{\hat u},\mathbf{A}}\, \mu_2(d\hat u)
    \]
    for \(\mathbf{A}\) being a multiplication operator, or the convolution of a shift and a measure, if
    \[
        \int_{S^1(\hat\Hcal)} \rho_{\hat u}\, \mu_1(d\hat u) = \int_{S^1(\hat\Hcal)} \rho_{\hat u}\, \mu_2(d\hat u).
    \]
    First consider \(\mathbf{A} = \int\hat\Sbf_a\,m(da)\).
    Then by Corollary~\ref{cor:shift-good}
    \begin{multline*}
        \int_{S^1(\hat\Hcal)} \Braket{\Expect\Tbf_\xi\rho_{\hat u},\mathbf{A}}\, \mu_1(d\hat u) =
        \int_{S^1(\hat\Hcal)} \Braket{\rho_{\hat u}, \mathbf{A}}\,\mu_1(d\hat u) = \\ =
        \int_{S^1(\hat\Hcal)} \Braket{\rho_{\hat u}, \mathbf{A}}\,\mu_2(d\hat u) =
        \int_{S^1(\hat\Hcal)} \Braket{\Expect\Tbf_\xi\rho_{\hat u},\mathbf{A}}\, \mu_2(d\hat u).
    \end{multline*}
    Now consider \(\mathbf{A} = \hat\Mbf_f\) and denote \(g(x) := \Expect f(\xi-x)\).
    Then by Lemma~\ref{l:mult-good}
    \begin{multline*}
        \int_{S^1(\hat\Hcal)} \Braket{\Expect\Tbf_\xi\rho_{\hat u},\mathbf{A}}\, \mu_1(d\hat u) =
        \int_{S^1(\hat\Hcal)} \Braket{\rho_{\hat u}, \hat\Mbf_g}\,\mu_1(d\hat u) = \\ =
        \int_{S^1(\hat\Hcal)} \Braket{\rho_{\hat u}, \hat\Mbf_g}\,\mu_2(d\hat u) =
        \int_{S^1(\hat\Hcal)} \Braket{\Expect\Tbf_\xi\rho_{\hat u},\mathbf{A}}\, \mu_2(d\hat u).
    \end{multline*}
\end{proof}

The following theorem shows that the quantum channel \(\Expect \Tbf_\xi\) is fabulous, since in many cases it produces singular states.

\begin{theorem}
    Suppose $\xi$ does not have a discrete part.
    Then the quantum state \(\Expect\Tbf_\xi\rho\) is singular.
\end{theorem}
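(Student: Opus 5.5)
The plan is to verify the defining condition of $\Sigma_s(\hat\Hcal)$ directly: for every $\hat v\in\hat\Hcal$ we show that $\Braket{\Expect\Tbf_\xi\rho,\mathbf{P}_{\hat v}}=0$. The projector $\mathbf{P}_{\hat v}$ is not obviously a multiplication operator, so we do not apply Lemma~\ref{l:mult-good} to it directly. Instead we dominate it by one. Write $\hat v=\sum_k d_k\one_{q_k}$, which is supported on the countable set $Q=\{q_k\}$. Let $\mathbf{P}_Q:=\hat\Mbf_{\one_Q}$ be multiplication by the indicator of $Q$. Since $Q$ is countable it is Borel, so $\mathbf{P}_Q\in C(\hat\Hcal)$. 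Because $\hat v$ lies in the range of the projector $\mathbf{P}_Q$, we have the operator inequality $0\leq\mathbf{P}_{\hat v}\leq\mathbf{P}_Q$. Any state is non-negative on the cone $C^+(\hat\Hcal)$, so
\[
0\leq\Braket{\Expect\Tbf_\xi\rho,\mathbf{P}_{\hat v}}\leq\Braket{\Expect\Tbf_\xi\rho,\mathbf{P}_Q}.
\]
It therefore suffices to show that the right-hand side vanishes. One caveat: if $\mathbf{P}_{\hat v}\notin C(\hat\Hcal)$, we read singularity as the vanishing of the state on all multiplications by indicators of countable sets, which dominate every finite-rank projector.

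First take $\rho=\rho_{\hat u}$ a pure vector state with $\hat u=\sum_k c_k\one_{p_k}$. Lemma~\ref{l:mult-good} with $f=\one_Q$ gives
\[
\Braket{\Expect\Tbf_\xi\rho,\mathbf{P}_Q}=\sum_k\abs{c_k}^2\,\Expect\one_Q(\xi-p_k)=\sum_k\abs{c_k}^2\,\Pr\{\xi\in Q+p_k\}.
\]
The set $Q+p_k$ is countable. Since $\xi$ has no discrete part, it has no atoms, so each probability is zero and the sum vanishes. This is the same countability mechanism as in Lemma~\ref{l:zero-shift}.

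For a general state we use the representation $\rho=\int\rho_{\hat u}\,\mu(d\hat u)$ and the definition of $\Expect\Tbf_\xi\rho$ as the corresponding Pettis integral. The integrand $\Braket{\Expect\Tbf_\xi\rho_{\hat u},\mathbf{P}_Q}$ is identically zero in $\hat u$, so integrating against the finitely additive $\mu$ still gives zero. Hence $\Braket{\Expect\Tbf_\xi\rho,\mathbf{P}_{\hat v}}=0$ for all $\hat v$, and the state is singular.

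The main obstacle is justifying the domination step. We need $\mathbf{P}_{\hat v}\leq\mathbf{P}_Q$ as elements of the algebra, and positivity of the state on differences of such operators. This requires that $\mathbf{P}_Q-\mathbf{P}_{\hat v}$ be a non-negative operator in $C(\hat\Hcal)$. If membership of $\mathbf{P}_{\hat v}$ in $C(\hat\Hcal)$ is problematic, we fall back on the convention above: the paper itself treats $\mathbf{P}_{\hat v}$ as a multiplication-type operator, and we then test singularity on $\mathbf{P}_Q$ directly.
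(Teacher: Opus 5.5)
Your proof is correct, but it takes a different route from the paper's. The paper works directly with the rank-one projector. For a pure state it writes $\Braket{\Expect\Tbf_\xi\rho_{\hat u},\mathbf{P}_{\hat v}}=\Expect\abs{(\hat\Sbf_\xi\hat u,\hat v)}^2$. It then notes, as in Lemma~\ref{l:zero-shift}, that $h\mapsto(\hat\Sbf_h\hat u,\hat v)$ vanishes outside the countable set of differences between points of the supports of $\hat u$ and $\hat v$. Since $\xi$ has no atoms, the expectation is zero. General states are handled by integrating against $\mu$, exactly as you do.

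You instead dominate $\mathbf{P}_{\hat v}$ by the multiplication projector $\hat\Mbf_{\one_Q}$ and let Lemma~\ref{l:mult-good} do the computation. The same countability mechanism then appears as $\Pr\{\xi\in Q+p_k\}=0$. The paper's route is shorter and needs no operator inequality or positivity argument. Yours reuses a proved lemma, which also settles measurability, and gives a stronger conclusion. The averaged state vanishes on $\hat\Mbf_{\one_Q}$ for every countable $Q$ (e.g.\ $Q=\mathbb{Q}$). This does not follow from vanishing on the rank-one projectors $\mathbf{P}_{\one_q}$, $q\in Q$, because the state is only finitely additive. Strictly, Lemma~\ref{l:mult-good} must be read for genuine bounded Borel functions rather than Lebesgue classes, since $\one_Q=0$ $\lambda$-a.e.\ while $\hat\Mbf_{\one_Q}\neq0$; its proof already works this way.

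You were right not to apply Lemma~\ref{l:mult-good} to $\mathbf{P}_{\hat v}$ itself. Despite the paper's remark, $\mathbf{P}_{\hat v}$ is a multiplication operator only when $\hat v$ is supported at a single point. Your caveat and fallback convention are unnecessary, however. The operator $\ket{\one_p}\bra{\one_q}=\hat\Mbf_{\one_p}\hat\Sbf_{q-p}$ is a product of generators of $C(\hat\Hcal)$, the second factor being convolution with $\delta_{q-p}$. Hence $\mathbf{P}_{\hat v}\in C(\hat\Hcal)$ for finitely supported $\hat v$, and for general $\hat v$ by norm approximation. Therefore $\hat\Mbf_{\one_Q}-\mathbf{P}_{\hat v}\in C^+(\hat\Hcal)$, and your domination step holds as written.
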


\begin{proof}
    We need to show that for any \(\hat v\in \hat\Hcal\) we have
    \[
        \Braket{\Expect\Tbf_{\xi}\rho, \mathbf{P}_{\hat v}} =0.
    \]
    If \(\rho=\rho_{\hat u}\) is a vector pure state, then
    \[
        \Braket{\Expect\Tbf_{\xi}\rho, \mathbf{P}_{\hat v}} =
        \Expect (\Sbf_{\xi}\hat u, \mathbf{P}_{\hat v}\Sbf_{\xi}\hat u)=
        \Expect \abs{(\hat\Sbf_{\xi}\hat u,\hat v)}^2=0,
    \]
    where
    \begin{itemize}
        \item the first equality is by definition of the Pettis integral, and by Proposition~\ref{p:channel-act},

        \item the second equality is by definition of the projector, and

        \item the last equality can be shown analogously to Lemma~\ref{l:zero-shift}.
    \end{itemize}
    If \(\rho\) is given by
    \[
        \rho = \int_{S^1(\hat\Hcal)} \rho_{\hat u}\, \mu(d\hat u),
    \]
    then
    \[
        \Braket{\Expect\Tbf_{\xi}\rho, \mathbf{P}_{\hat v}} =
        \int_{S^1(\hat\Hcal)} \Braket{\Expect\Tbf_\xi\rho_{\hat u}, \mathbf{P}_{\hat v}}\, \mu(d\hat u) = 0,
    \]
    since the integrand term is zero.
\end{proof}

Now consider the random process \(\{\xi_t\}_{t\geq 0}\) of independent random variables with the property
\begin{itemize}
    \item \(\xi_0 = 0\), and
    \item \(\xi_s+\xi_t \stackrel{d}{=} \xi_{s+t}\).
\end{itemize}
For example, it could be independent \(\xi_t\sim N(0,t)\), or \(\xi_t\sim\mathrm{Cauchy}(t)\).
Now consider the quantum channel \(\Tbf(t)\), for any \(\rho\in\Sigma(\hat\Hcal)\) defined by
\[
    \Tbf(t)\rho := \Expect \Tbf_{\xi_t}\rho.
\]

\begin{theorem}\label{t:gaus-semigroup}
    The quantum channels \(\Tbf(t)\) form a semigroup for \(t\geq 0\).
\end{theorem}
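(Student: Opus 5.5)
The plan is to prove \(\Tbf(0)=\mathrm{id}\) and \(\Tbf(s)\Tbf(t)=\Tbf(s+t)\) by testing both sides on the generators of \(C(\hat\Hcal)\). A state on \(C(\hat\Hcal)\) is a continuous linear functional, so it suffices to check equality on two classes: multiplication operators \(\hat\Mbf_f\) with bounded measurable \(f\), and convolution operators \(\mathbf{A}=\int\hat\Sbf_a\,m(da)\). Equality on these extends to their linear span and products by linearity, and then by continuity to the whole algebra. Products need a short extra argument, discussed at the end. The case \(t=0\) is immediate: \(\xi_0=0\) gives \(\Tbf_{\xi_0}=\Tbf_0=\mathrm{id}\).

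I first reduce to pure vector states. A general \(\rho\) has the form \(\rho=\int\rho_{\hat u}\,\mu(d\hat u)\), and \(\Tbf(t)\) is defined through the same Pettis integral, independently of \(\mu\) by the preceding lemma. So the semigroup identity for general \(\rho\) follows from the identity for \(\rho_{\hat u}\), provided I also know that \(\Tbf(t)\rho_{\hat u}\) again has a representation to which \(\Tbf(s)\) can be applied through the formulas of Lemma~\ref{l:mult-good} and Corollary~\ref{cor:shift-good}. I avoid choosing such a representation explicitly. Instead I use the closed formulas these results give for any state of the form \(\int\rho_{\hat u}\,\mu(d\hat u)\):
\[
\braket{\Tbf(t)\rho,\hat\Mbf_f}=\braket{\rho,\hat\Mbf_{g_t}},\quad g_t(x)=\Expect f(\xi_t-x),\qquad
\braket{\Tbf(t)\rho,\mathbf{A}}=\braket{\rho,\mathbf{A}}.
\]
These describe \(\Tbf(t)\) in dual form on the generators.

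The composition then reduces to two checks.
\begin{enumerate}
\item For the convolution operators, \(\braket{\Tbf(s)\Tbf(t)\rho,\mathbf{A}}=\braket{\Tbf(t)\rho,\mathbf{A}}=\braket{\rho,\mathbf{A}}=\braket{\Tbf(s+t)\rho,\mathbf{A}}\).
\item For the multiplication operators, \(\braket{\Tbf(s)\Tbf(t)\rho,\hat\Mbf_f}=\braket{\Tbf(t)\rho,\hat\Mbf_{g_s}}=\braket{\rho,\hat\Mbf_{h}}\) with
\[
h(x)=\Expect g_s(\xi_t-x)=\Expect\,\Expect' f(\xi'_s-\xi_t+x).
\]
\end{enumerate}
The sign convention in Lemma~\ref{l:mult-good} must be tracked carefully here: \(\hat\Sbf_h\one_p=\one_{p-h}\), and the real dependence is \(f(p-\xi)\). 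For the distributional identity I use that \(\xi_s,\xi_t\) are independent and \(\xi_s+\xi_t\stackrel{d}{=}\xi_{s+t}\). I also need the symmetric increments that hold in both examples (Gaussian, Cauchy), or else I must write the sign consistently so that only sums appear. Fubini on bounded \(f\) then gives \(h(x)=\Expect f(x-\xi_{s+t})=g_{s+t}(x)\), hence \(\braket{\Tbf(s)\Tbf(t)\rho,\hat\Mbf_f}=\braket{\Tbf(s+t)\rho,\hat\Mbf_f}\).

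I expect the main obstacle to be passing from the generators to all of \(C(\hat\Hcal)\). States are not multiplicative, so agreement on \(\hat\Mbf_f\) and on \(\mathbf{A}\) does not automatically give agreement on products such as \(\hat\Mbf_f\hat\Sbf_a\).

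My first attempt handles these products directly on pure states \(\rho_{\hat u}\), via the identity
\[
(\hat\Sbf_h\hat u,\hat\Mbf_f\hat\Sbf_a\hat\Sbf_h\hat u)=\sum_{j,k}c_j\overline{c_k}\,f(p_k-a-h)\,\delta_{p_j,p_k-a}.
\]
The shift cancels in the Kronecker delta exactly as in Lemma~\ref{l:shift-good}, and \(f\) picks up the random translate exactly as in Lemma~\ref{l:mult-good}. This gives \(\braket{\Tbf(t)\rho,\hat\Mbf_f\hat\Sbf_a}=\braket{\rho,\hat\Mbf_{g_t}\hat\Sbf_a}\), and the same convolution computation then closes the semigroup identity on the spanning words.

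If that route fails, the fallback is to interpret the semigroup property only on the generating family. That is the level at which the paper's correctness lemmas are proved.
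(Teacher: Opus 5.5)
Your route differs from the paper's. The paper reduces to a pure state and works in the Heisenberg picture for an arbitrary $\mathbf{A}\in C(\hat\Hcal)$ at once. It writes $\braket{\Tbf(t)\Tbf(s)\rho,\mathbf{A}}$ as the integral of $\braket{\rho,\hat\Sbf_y^*\hat\Sbf_x^*\mathbf{A}\hat\Sbf_x\hat\Sbf_y}$ against $\Pr_{\xi_s}(dx)\Pr_{\xi_t}(dy)$. It then concludes from $\hat\Sbf_x\hat\Sbf_y=\hat\Sbf_{x+y}$, Fubini, and $\xi_s+\xi_t\stackrel{d}{=}\xi_{s+t}$. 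Because the conjugation acts on $\mathbf{A}$ as a whole, nothing is checked generator by generator, so the product problem you raise never appears. Your Kronecker-delta identity for $\hat\Mbf_f\hat\Sbf_a$ is exactly that conjugation identity for one particular operator.

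What your explicit dual formulas buy is a point the paper's one-line chain leaves implicit. To apply the second channel in dual form to the non-normal state produced by the first, the averaged operator $\Expect\hat\Sbf_{\xi}^*\mathbf{A}\hat\Sbf_{\xi}$ needs measurable matrix elements (cf.\ Example~\ref{ex:bad}) and must lie in $C(\hat\Hcal)$ again. Your formulas $\hat\Mbf_f\hat\Sbf_a\mapsto\hat\Mbf_{g_t}\hat\Sbf_a$ and $\int\hat\Sbf_a\,m(da)\mapsto\int\hat\Sbf_a\,m(da)$ show this concretely.

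As written, however, the step you yourself call the main obstacle is only asserted. You verify the dual formula on $\hat\Mbf_f$, on $\int\hat\Sbf_a\,m(da)$ and on single words $\hat\Mbf_f\hat\Sbf_a$. You then say these ``spanning words'' close the argument, but nothing shows they are dense in $C(\hat\Hcal)$. Two observations supply this.
\begin{enumerate}
\item For $\hat v,\hat w\in\hat\Hcal$ the function $a\mapsto(\hat v,\hat\Sbf_a\hat w)$ vanishes off the countable set $\mathrm{supp}\,\hat w-\mathrm{supp}\,\hat v$. So only the atoms of $m$ contribute, and $\int\hat\Sbf_a\,m(da)=\sum_a m(\{a\})\hat\Sbf_a$ is a norm-convergent series.
\item The Weyl-type relation $\hat\Sbf_a\hat\Mbf_f=\hat\Mbf_{f(\cdot+a)}\hat\Sbf_a$ brings every finite product of multiplications and point shifts to the form $\hat\Mbf_F\hat\Sbf_b$ with $F$ bounded Borel.
\end{enumerate}
Hence finite sums of such words form a norm-dense subalgebra of $C(\hat\Hcal)$. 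Your computation, with $F$ in place of $f$, gives the identity there, and norm continuity of states finishes the proof. Your fallback of proving the identity only on the generators would not prove the theorem: as you note, two states agreeing on generators need not agree on products.

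Finally, commit to the corrected sign instead of hedging. Lemma~\ref{l:mult-good} as printed, $g(x)=\Expect f(\xi-x)$, cannot be right: for $\xi=0$ it sends $f$ to $f(-\cdot)$, so $\Tbf(0)$ would not be the identity. Symmetric increments do not rescue it either, since composing two such reflected averages produces no reflection, while $g_{s+t}$ contains one. With $\hat\Sbf_h\one_p=\one_{p-h}$, which is forced by $\hat\Sbf_a\hat u(x)=\hat u(x+a)$, one gets $g_t(x)=\Expect f(x-\xi_t)$. The composite is then $\Expect f(x-\xi_t-\xi'_s)=g_{s+t}(x)$, only sums occur, and no symmetry hypothesis is needed. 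Replace your displayed intermediate $f(\xi'_s-\xi_t+x)$ accordingly.
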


\begin{proof}
    It is clear that \(\Tbf(0)\) is the identity operator.
    By \cite[Exercise~1.11]{Conway}, we may go from the Schr\"odinger representation to the Heisenberg representation; that is, we may write
    \[
        \Braket{\Expect\Tbf_\xi \rho, \mathbf{A}} = \Braket{\rho, \Expect \hat\Sbf_\xi\mathbf{A}\hat\Sbf_\xi^*},
    \]
    where both sides should be interpreted as Pettis integrals.
    Fix any \(t,s\geq 0\).
    Since \(\Tbf(t)\) is a quantum channel, it is sufficient to prove that
    \[
        \Tbf(t)\Tbf(s)\rho = \Tbf(t+s)\rho
    \]
    for \(\rho = \rho_{\hat u}\) being a pure vector state.
    Then
    \[
        \Braket{\Expect\Tbf(s) \rho, \mathbf{A}} = \int\Braket{\rho, \hat\Sbf_x\mathbf{A}\hat\Sbf_x^*}\,\Pr_{\xi_s}(dx),
    \]
    and hence
    \[
        \Tbf(t)\Tbf(s)\rho =
        \int \Braket{\rho, \hat\Sbf^*_{y}\hat\Sbf^*_{x}\mathbf{A}\hat\Sbf_{x}\hat\Sbf_{y}} \,\Pr_{\xi_s}(dx)\Pr_{\xi_t}(dy) =
        \int \Braket{\rho, \hat\Sbf^*_{z}\mathbf{A}\hat\Sbf_{z}} \,\Pr_{\xi_s+\xi_t}(dz) = \Tbf(t+s)\rho,
    \]
    where
    \begin{itemize}
        \item the first equality is by definition of \(\Tbf(t)\) and \(\Tbf(s)\),

        \item the second equality holds by the Fubini's theorem for \(z=x+y\), and

        \item the last equality follows since \(\xi_t+\xi_s \stackrel{d}{=} \xi_{t+s}\).
    \end{itemize}
\end{proof}

Now let us consider another quantum channel, for any \(\rho\in\Sigma(\hat\Hcal)\) given by
\[
    \Phbf_h \rho := \hat\Mbf_h\rho \hat\Mbf_h^*.
\]
We are interested in the quantum channel
\[
    \Expect \Phbf_{\xi}\rho,
\]
which is defined as the Pettis integral for all \(\mathbf{A}\in C(\hat\Hcal)\) by
\[
    \Braket{\Expect\Phbf_\xi\rho, \mathbf{A}} := \Expect \Braket{\Phbf_\xi\rho, \mathbf{A}}.
\]
The second term is well defined since the map \(h\mapsto \hat\Mbf_h\) is strongly continuous.

\begin{theorem}
    The quantum channel \(\Expect\Phbf_\xi\) preserves the normal state.
\end{theorem}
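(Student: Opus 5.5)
The plan is to prove the statement first for vector pure states and then pass to general normal states by linearity and continuity. The normality criterion stated in \S\ref{s:q-walks} is
\[
    \sup_{\mathbf{P}_{fin}}\Braket{\rho,\mathbf{P}_{fin}}=1 .
\]
So it suffices to show two things: \(\Expect\Phbf_\xi\rho\) is again given by a trace-class density, and it assigns full mass to finite-dimensional projectors in the limit.

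\emph{Pure states.} Let \(\rho=\rho_{\hat u}\) with \(\hat u=\sum_k c_k\one_{p_k}\). Since \(\hat\Mbf_h\) is multiplication by \(x\mapsto e^{ihx}\), we have
\[
    \Phbf_h\rho=\rho_{\hat\Mbf_h\hat u}, \qquad \hat\Mbf_h\hat u=\sum_k c_k e^{ihp_k}\one_{p_k}.
\]
Thus, unlike the shift channel \(\Tbf_h\), every \(\Phbf_h\rho\) is supported in the fixed separable subspace \(\hat\Hcal_u:=\overline{\mathrm{span}}\{\one_{p_k}\}\). This is the key structural difference from the proof of the previous theorem, where the supports moved through uncountably many disjoint sets.

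For \(\mathbf{A}\in C(\hat\Hcal)\) the function
\[
    h\mapsto\Braket{\Phbf_h\rho,\mathbf{A}}=(\hat\Mbf_h\hat u,\mathbf{A}\hat\Mbf_h\hat u)
\]
is continuous, by the strong continuity of \(h\mapsto\hat\Mbf_h\) noted in the paper. Hence the Pettis integral exists, and it coincides with \(\operatorname{tr}(\sigma\mathbf{A})\), where \(\sigma\) is the Bochner integral in trace norm
\[
    \sigma:=\Expect\,\bigl(\hat\Mbf_\xi\hat u\otimes\overline{\hat\Mbf_\xi\hat u}\bigr).
\]
In the basis \(\{\one_{p_k}\}\) this operator has the explicit matrix
\[
    \sigma_{jk}=c_j\overline{c_k}\,\chi_\xi(p_j-p_k),
\]
a Schur product of the rank-one matrix \(c_j\overline{c_k}\) with the positive definite kernel \(\chi_\xi(p_j-p_k)\). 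It is positive, acts on \(\hat\Hcal_u\), and has trace \(\sum_k\abs{c_k}^2=1\). Therefore \(\Expect\Phbf_\xi\rho_{\hat u}\) is a normal state.

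\emph{General normal states.} A normal state \(\rho\) is a countable convex combination \(\sum_n\lambda_n\rho_{\hat u_n}\), which converges in trace norm, and one may take \(\mu\) countably additive, as remarked before Theorem~\ref{t:gaus-semigroup}. The map \(\Expect\Phbf_\xi\) is affine and contractive in the dual norm, since each \(\Phbf_h\) is. Hence
\[
    \Expect\Phbf_\xi\rho=\sum_n\lambda_n\,\Expect\Phbf_\xi\rho_{\hat u_n},
\]
a norm-convergent series of normal states, and the set \(\Sigma_n(\hat\Hcal)\) is norm closed. Equivalently, with \(\mathbf{P}_N\) the projector onto \(\mathrm{span}\{\one_{p_1},\dots,\one_{p_N}\}\), we get
\[
    \Braket{\Expect\Phbf_\xi\rho_{\hat u},\mathbf{P}_N}=\sum_{k\le N}\abs{c_k}^2\to1,
\]
because \(\mathbf{P}_N\) commutes with every \(\hat\Mbf_h\). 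This verifies the criterion directly.

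The main obstacle I expect is the identification of the Pettis integral with \(\operatorname{tr}(\sigma\,\cdot)\) on all of \(C(\hat\Hcal)\). This requires Bochner measurability of \(h\mapsto\hat\Mbf_h\hat u\otimes\overline{\hat\Mbf_h\hat u}\) in \(\Tcal_1\), which follows from continuity in \(h\), itself a consequence of \(\norm{(\hat\Mbf_h-\hat\Mbf_{h'})\hat u}\to0\). Once that holds, the rest is routine.
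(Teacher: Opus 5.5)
Your proposal is correct and takes the same approach as the paper. Both reduce to a vector state $\rho_{\hat u}$ with $\hat u=\sum_k c_k\one_{p_k}$, use that $\hat\Mbf_h$ acts diagonally in the basis $\{\one_{p_k}\}$, and average to obtain the density matrix with entries $c_j\overline{c_k}\,\chi_\xi(p_j-p_k)$. Your extra steps fill in what the paper leaves implicit: the trace-norm Bochner integral identifying the Pettis integral, positivity and unit trace of $\sigma$, the passage from pure to mixed normal states, and the direct check with $\mathbf{P}_N$.
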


\begin{proof}
    It is sufficient to check this for a pure state.
    So, consider the case \(\hat u\in \hat\Hcal\) and \(\rho=\rho_{\hat u}\).
    Note that \(\hat\Mbf_h \rho_{\hat u}\hat\Mbf_h^* = \rho_{\hat\Mbf_h\hat u}\) and \(\hat\Mbf_h\one_a = e^{iha}\one_a\).
    Then for
    \[
        \hat u = \sum_{j=1}^\infty c_j\one_{p_j}
    \]
    we have
    \[
        \hat\Mbf_h\hat u = \sum_{j=1}^\infty c_je^{ihp_j}\one_{p_j},
    \]
    and so
    \[
        \hat\Mbf_h \rho_{\hat u}\hat\Mbf_h^* =
        \sum_j \abs{c_j}^2\one_{p_j} + \sum_{j\neq k} c_j\overline{c_k} e^{ih(p_j-p_k)}\ket{\one_{p_j}}\bra{\one_{p_k}}.
    \]
    Thus
    \[
        \Expect\Phbf_\xi\rho_{\hat u} = \sum_j \abs{c_j}^2\one_{p_j} + \sum_{j\neq k} c_j\overline{c_k} \chi_\xi(p_j-p_k)\ket{\one_{p_j}}\bra{\one_{p_k}},
    \]
    which is the normal state.
\end{proof}

Analogously to \(\Tbf(t)\), we may define the channel \(\Phbf(t)\) by
\[
    \Phbf(t)\rho := \Expect \Phbf_{\xi_t}\rho.
\]

\begin{theorem}
    The quantum channels \(\Phbf(t)\) form a semigroup for \(t\geq 0\).
\end{theorem}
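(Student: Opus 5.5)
The plan is to mirror the proof of Theorem~\ref{t:gaus-semigroup}, with the shifts \(\hat\Sbf\) replaced by the multiplication operators \(\hat\Mbf\).

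First, \(\Phbf(0)\) is the identity, since \(\xi_0=0\) and \(\hat\Mbf_0=\mathbf{I}\).

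Next, I reduce to pure vector states. For a general \(\rho=\int_{S^1(\hat\Hcal)}\rho_{\hat u}\,\mu(d\hat u)\), both \(\Phbf(t)\Phbf(s)\rho\) and \(\Phbf(t+s)\rho\) are Pettis integrals over \(\mu\) of the corresponding expressions for \(\rho_{\hat u}\), exactly as \(\Expect\Tbf_\xi\) was extended. So it suffices to prove
\[
\Phbf(t)\Phbf(s)\rho_{\hat u}=\Phbf(t+s)\rho_{\hat u}
\]
for \(\hat u\in\hat\Hcal\). Here \(\Phbf(s)\rho_{\hat u}\) is generally not pure, so \(\Phbf(t)\) must be applied to the mixed state \(\int\rho_{\hat\Mbf_x\hat u}\,\Pr_{\xi_s}(dx)\). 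This is a genuine countably additive mixture, so by linearity and the Pettis definition
\[
\Braket{\Phbf(t)\Phbf(s)\rho_{\hat u},\mathbf{A}}=\iint (\hat\Mbf_y\hat\Mbf_x\hat u,\mathbf{A}\hat\Mbf_y\hat\Mbf_x\hat u)\,\Pr_{\xi_s}(dx)\Pr_{\xi_t}(dy)
\]
for every \(\mathbf{A}\in C(\hat\Hcal)\).

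Then I use the group law \(\hat\Mbf_y\hat\Mbf_x=\hat\Mbf_{x+y}\) (the exponents \(e^{iyp}e^{ixp}\) multiply) and Fubini's theorem to rewrite the double integral as
\[
\int (\hat\Mbf_z\hat u,\mathbf{A}\hat\Mbf_z\hat u)\,\Pr_{\xi_s+\xi_t}(dz),
\]
where \(\Pr_{\xi_s+\xi_t}\) is the convolution \(\Pr_{\xi_s}*\Pr_{\xi_t}\) because \(\xi_s\) and \(\xi_t\) are independent. By the hypothesis \(\xi_s+\xi_t\stackrel{d}{=}\xi_{t+s}\), this equals \(\Braket{\Phbf(t+s)\rho_{\hat u},\mathbf{A}}\).

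As a cross-check, I can verify the identity at the level of the explicit formula from the previous theorem. The diagonal terms \(\abs{c_j}^2\one_{p_j}\) are unchanged by each channel. The off-diagonal coefficients get multiplied first by \(\chi_{\xi_s}(p_j-p_k)\) and then by \(\chi_{\xi_t}(p_j-p_k)\), and
\[
\chi_{\xi_s}\chi_{\xi_t}=\chi_{\xi_s+\xi_t}=\chi_{\xi_{t+s}}.
\]
Matrix elements against \(\mathbf{A}\) then agree once the series is shown to converge, which holds since \(\abs{\chi}\le1\) and \(\{c_j\}\in\ell_2\).

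The main obstacle is justifying Fubini's theorem, i.e.\ the joint measurability and boundedness of the integrand \((x,y)\mapsto(\hat\Mbf_{x+y}\hat u,\mathbf{A}\hat\Mbf_{x+y}\hat u)\). Unlike in the shift case, this is easy here. The map \(h\mapsto\hat\Mbf_h\hat u\) is norm-continuous: on \(\hat u=\sum_j c_j\one_{p_j}\) one has
\[
\norm{\hat\Mbf_h\hat u-\hat u}^2=\sum_j\abs{c_j}^2\abs{e^{ihp_j}-1}^2\to0
\]
by dominated convergence. Hence the integrand is continuous in \(z=x+y\), so it is jointly Borel, and it is bounded by \(\norm{\mathbf{A}}\norm{\hat u}^2\). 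Fubini then applies, and the argument works for arbitrary \(\mathbf{A}\in C(\hat\Hcal)\), without splitting into multiplication and convolution operators.
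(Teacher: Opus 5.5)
Your proof is correct and takes the same route as the paper, which only says the argument is analogous to Theorem~\ref{t:gaus-semigroup}: reduce to pure states, use $\hat\Mbf_y\hat\Mbf_x=\hat\Mbf_{x+y}$, apply Fubini, and use $\Pr_{\xi_s}*\Pr_{\xi_t}=\Pr_{\xi_{t+s}}$. Your norm-continuity of $h\mapsto\hat\Mbf_h\hat u$ supplies the Fubini justification the paper leaves implicit, and because your double integral is already taken against the product measure, what you actually need is this convolution identity rather than independence of $\xi_s$ and $\xi_t$ (which cannot hold when $s=t$).
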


\begin{proof}
    The proof is analogous to the proof of Theorem~\ref{t:gaus-semigroup}.
\end{proof}

\begin{remark}
    The channel \(\Phbf(t)\) can be defined on the whole Banach algebra \(B(\hat\Hcal)\) instead of just \(C(\hat\Hcal)\).
    It will preserve the normal state and form a semigroup.
\end{remark}

\printbibliography

\end{document}